\newtheorem{theorem}{Theorem}[section]
\newtheorem{corollary}[theorem]{Corollary}
\newtheorem{definition}[theorem]{Definition}
\newtheorem{lemma}[theorem]{Lemma}
\newtheorem*{theorem*}{Theorem}
\newtheorem*{corollary*}{Corollary}
\newtheorem*{lemma*}{Lemma}
\newtheorem*{observation*}{Observation}
\newcommand{\Xomit}[1]{}
\newcommand{\vone}{\vspace{.1in}}
\newcommand{\la}{\leftarrow}
\newcommand{\congest}{{\sc Congest}}
\newcommand{\Tau}{H}
\newcommand{\hide}[1]{}
\newcommand{\centers}{blocker set }
\newcommand{\boi}{\begin{enumerate}}
\newcommand{\eoi}{\end{enumerate}}
\newcommand{\bii}{\begin{itemize}}
\newcommand{\eii}{\end{itemize}}
\newcommand{\R}{Randomized }
\newcommand{\A}{Arbitrary }
\newcommand{\I}{Integer }
\newcommand{\D}{Deterministic }
\newcommand{\DI}{Directed \& Undirected }
\renewcommand\labelenumi{(\roman{enumi})}
\renewcommand\theenumi\labelenumi
\begin{document}

\title{New and Simplified Distributed Algorithms for Weighted All Pairs Shortest Paths}
\author{Udit Agarwal $^{\star}$ and Vijaya Ramachandran\thanks{Dept. of Computer Science, University of Texas, Austin TX 78712. Email: {\tt udit@cs.utexas.edu, vlr@cs.utexas.edu}. This work was supported in part by NSF Grant CCF-1320675. }}

\maketitle
\begin{abstract}
We consider the problem of computing all pairs shortest paths (APSP) and
shortest paths for $k$ sources
in a weighted graph in the distributed {\sc Congest}  model. For graphs with non-negative integer
edge weights (including zero weights) we build on a recent pipelined algorithm~\cite{AR18}
to obtain a
$\tilde{O}(\lambda^{1/4}\cdot  n^{5/4} )$-round bound
 for graphs with  edge-weight at most $\lambda$, and
$\tilde{O}(n \cdot \bigtriangleup^{1/3})$-round bound
 for shortest path distances at most $\bigtriangleup$.
Additionally, we  simplify some of the procedures in the earlier APSP algorithms for non-negative edge weights in~\cite{HNS17,ARKP18}.
We also present results for computing $h$-hop shortest paths and shortest paths from $k$ given sources.
 
In other results, we present a randomized exact APSP algorithm for graphs with arbitrary edge weights that runs in
$\tilde{O}(n^{4/3})$ rounds w.h.p. in $n$, which improves the previous best $\tilde{O}(n^{3/2})$ bound, which is deterministic. We also present
an $\tilde{O}(n/\epsilon^2)$-round deterministic $(1+\epsilon)$ approximation algorithm for graphs with non-negative $poly(n)$ integer weights (including
zero edge-weights),
improving  results in~\cite{Nanongkai14,LP15} that hold only for positive integer weights.
\end{abstract}

\section{Introduction}\label{sec:intro}

 Designing distributed algorithms for various network and graph problems 
such as shortest paths~\cite{ARKP18,HNS17,LP13,PR18,KN18}
is
a extensively studied area of research.
The {\sc Congest} model (described in Sec~\ref{sec:congest})
is a  
widely-used
model for these algorithms,
see~\cite{ARKP18,Elkin17,HNS17,LP13}.
In this paper we consider  distributed algorithms for the 
 computing all pairs shortest paths (APSP) and  related problems in a graph  with non-negative edge weights
 in the {\sc Congest} model.

In sequential computation, shortest paths can be computed much faster in graphs with non-negative
edge-weights (including zero weights) using the classic Dijkstra's algorithm~\cite{Dijkstra59} than in
graphs with
negative edge weights. Additionally, negative edge-weights raise the possibility
of negative weight cycles in the graph, which usually do not
occur in practice, and hence are not modeled by real-world weighted graphs. Thus, in the distributed
setting, it is of importance to design fast shortest path algorithms that can handle non-negative
edge-weights, including edges of weight zero.

 The presence of zero weight edges creates challenges in the design of distributed algorithms  
as  observed in~\cite{HNS17}.
 (We review related work in Section~\ref{sec:related}.)
 One approach used for positive integer edge weights is to replace an edge of weight $d$ with $d$ 
 unweighted edges and then run an unweighted APSP algorithm such as~\cite{LP13,PR18} on this modified graph.
 This approach is used in approximate APSP algorithms~\cite{Nanongkai14, LP15}.
 However such an approach fails when zero weight edges may be present.   
 There are a few known algorithms that can handle zero weights, such as the $\tilde{O}(n^{5/4})$-round randomized
 APSP algorithm of Huang et al.~\cite{HNS17} (for polynomially bounded non-negative
 integer edge weights) and the
 $\tilde{O}(n^{3/2})$-round deterministic APSP algorithm of Agarwal et al.~\cite{ARKP18} (for graphs with arbitrary edge weights including zero weights). A deterministic pipelined algorithm for this problem that runs in at most $ 2 \cdot n \sqrt{\Delta} + 2n$ was
 recently given in~\cite{AR18}, where $\Delta$ is an upper bound on the shortest path length.

 \subsection{Our Results}

 We present several new results for computing APSP and related problems on  an $n$-node graph $G=(V,E)$ with non-negative edge 
 weights $w(e), e\in E$, including deterministic distributed sub-$n^{3/2}$-round algorithms for moderate
  weights (including zero weights)~\cite{AR18}. 
 All of our results hold for both directed and undirected graphs and we will assume
 w.l.o.g. that $G$ is directed. 
 
 Many of our results build on a recent deterministic distributed  pipelined algorithm we developed for APSP and $k$-SSP for 
graphs with non-negative integer weights (including zero weights)~\cite{AR18}. This algorithm computes the
$h$-hop
shortest path problem for $k$ sources
($(h,k)$-SSP), 
 with an additional constraint that the shortest paths have distance at most $\bigtriangleup$ in $G$,
 together with the corresponding shortest path trees, defined as follows.

\begin{definition}\label{def:h-sssp}
An \emph{$h$-hop shortest path from $u$ to $v$ in $G$}
is a path from $u$ to $v$ of minimum weight among all paths with at most $h$ edges (or {\it hops}).\\
  An  \emph{$h$-SSP tree for source $s$ and shortest path distance $\Delta$} is a tree rooted at $s$ that  contains  an $h$-hop shortest path
   from $s$ to every other vertex to which
  there exists an $h$-hop  path with weight at most $\bigtriangleup$ in $G$.
In the case of multiple $h$-hop shortest paths from $s$ to a vertex $v$, this tree contains the path with the smallest number of hops, breaking any further ties by choosing the predecessor vertex with smallest ID.
\end{definition}

The pipelined algorithm
 achieves the bounds in the following theorem.

 \begin{theorem}\label{thm:alg2}
\cite{AR18}  Let $G=(V,E)$ be a directed or undirected edge-weighted graph, where all edge weights are 
 non-negative integers (with zero-weight edges allowed).
 The following 
 deterministic bounds can be obtained  in the \congest{} model for shortest path distances at most 
 $\bigtriangleup$.\\
 (i) \emph{$(h,k)$-SSP} in $2 \sqrt{\bigtriangleup kh} + k + h $ rounds.\\
 (ii) \emph{APSP} in  $2n \sqrt{\bigtriangleup} + 2n$ rounds.\\
 (iii) \emph{$k$-SSP} in $2 \sqrt{\bigtriangleup kn} + n + k$ rounds.
 \end{theorem}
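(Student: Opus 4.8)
The plan is to prove part (i) directly and then obtain (ii) and (iii) as the special cases $(h,k)=(n,n)$ and $h=n$. Substituting $h=k=n$ into the bound of (i) gives $2n\sqrt{\bigtriangleup}+2n$ (every vertex is a source, and some shortest path to each vertex uses at most $n-1<n$ hops, so the $h$-hop restriction is vacuous for $h=n$), while $h=n$ alone gives $2\sqrt{\bigtriangleup k n}+k+n$, matching (iii). Thus the entire argument reduces to analyzing the pipelined $(h,k)$-SSP computation.

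For (i) I would describe the algorithm as a pipelined, Dijkstra-ordered Bellman--Ford: each of the $k$ sources floods its tree, and every vertex $v$ maintains, for each source $s$, a tentative pair (distance, hop-count). The design goal is that $v$ \emph{finalizes} and forwards its value for $s$ exactly once, and that finalizations happen in nondecreasing order of distance, so a forwarded value is never revised. The scheduling knob is a threshold $d^{*}$ (the \currentSD\ mechanism) that all vertices advance roughly in lockstep and that gates which (source,\,distance) messages are released in a round: a vertex forwards $d$ for source $s$ only once $d^{*}$ has passed $d$, certifying that no shorter $\le h$-hop path of distance $\le\bigtriangleup$ to $s$ through this region remains to be discovered. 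Correctness then has two pieces: (a) a value forwarded at distance $d$ with hop-count $j\le h$ is indeed the weight of a shortest $\le h$-hop path of distance $\le\bigtriangleup$, proved by induction on finalization order; and (b) ties are resolved as in Definition~\ref{def:h-sssp} by carrying the hop-count and, at equal hop-count, the predecessor ID inside each message and preferring the lexicographically smaller pair.

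For the round bound I would introduce the threshold increment as a free parameter $\tau$ and split the cost into two competing terms. Sweeping the threshold across $[0,\bigtriangleup]$ in steps of $\tau$ uses $\bigtriangleup/\tau$ phases, and the hop-limited propagation that settles, in each phase, all source--vertex pairs whose finalized distance lands in the current width-$\tau$ band contributes a cost that \emph{decreases} with $\tau$ (wider bands, fewer phases); conversely, widening a band forces more (source,\,distance) updates to be pipelined through an edge at once, contributing a congestion cost that \emph{grows} with $\tau$. Writing the total as $A/\tau+B\tau$ with $AB=\bigtriangleup k h$ and optimizing at $\tau=\sqrt{A/B}$ produces the $2\sqrt{\bigtriangleup k h}$ term, while the additive $k+h$ accounts for the start-up and drain latency of the pipeline---the time for the first and last messages to enter and clear the $\le h$-hop trees of the $k$ sources. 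A standard congestion-plus-dilation pipelining lemma then converts these per-edge message counts into the stated round count.

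I expect the main obstacle to be the interaction of \emph{zero-weight edges} with the finalize-in-increasing-distance discipline. With zero weights, an entire zero-weight-connected cluster can share the same finalized distance while the hop-count keeps rising, so the threshold cannot simply advance by positive integer steps and ``make progress'' in the usual Dijkstra sense; one must argue that within a single distance value the hop-limited propagation still terminates, forwards each source value only once, and does not inflate the per-edge congestion beyond the budget underlying the $A/\tau+B\tau$ count. Getting this bookkeeping tight---so that both the correctness induction of step (b) and the congestion count of the third paragraph survive zero-weight clusters---is the crux; the reduction in the first paragraph and the $A/\tau+B\tau$ optimization are then routine.
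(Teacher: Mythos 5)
The paper does not prove Theorem~\ref{thm:alg2} at all: it is imported verbatim from~\cite{AR18} and used as a black box, so there is no in-paper proof to compare your argument against. The closest material is Section~\ref{sec:sssp}, where Algorithm~\ref{algSSSP} and Lemma~\ref{lemma:sendBound} give the single-source instance of the same pipelined mechanism, and the closing remark of that section states the $k$-source generalization: send the current pair $(d^*,l^*)$ in round $\lceil d^*\gamma + l^*\rceil$ with $\gamma=\sqrt{hk/\bigtriangleup}$, yielding dilation $\bigtriangleup\gamma + h = \sqrt{\bigtriangleup hk}+h$ and total congestion $\sqrt{\bigtriangleup hk}$ over the $k$ sources, hence $2\sqrt{\bigtriangleup hk}+h+k$.

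Your skeleton is consistent with that mechanism: the reduction of (ii) and (iii) to (i) via $h=k=n$ and $h=n$ is correct, and the $A/\tau+B\tau$ balancing correctly locates the origin of the $2\sqrt{\bigtriangleup kh}$ term. But as written this is a plan, not a proof, and the two load-bearing steps are named rather than carried out. First, you flag the interaction of zero-weight edges with the ``finalize in increasing distance'' discipline as the crux and leave it open; the actual resolution (visible in Lemma~\ref{lemma:sendBound}) is that the send schedule is not a pure distance threshold swept in bands but the function $\lceil d\gamma + l\rceil$, whose value strictly increases along every edge of a shortest path even when the edge weight is zero, because the hop count contributes $+1$; this single observation is what makes the receive-before-send induction go through inside zero-weight clusters. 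Second, you appeal to ``a standard congestion-plus-dilation pipelining lemma'' to convert message counts into rounds, but no such black box is used: the schedule itself serializes the traffic, and the per-node, per-source send count is bounded by the elementary argument that each re-send requires a strictly smaller integer $d^*$ (dropping $d^*\gamma$ by at least $\gamma$) while $l^*$ can compensate by at most $h$ in total, giving at most $h/\gamma=\sqrt{\bigtriangleup h/k}$ sends. Until those two lemmas are actually proved, the proposal does not establish the stated round bounds.
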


The new results we present in this paper are the following.
 
 \vone
 \noindent
 {\bf 1. Faster Deterministic APSP for Non-negative, Moderate Integer Weights.}
 We improve on the 
 bounds given in $(ii)$ and $(iii)$ of Theorem~\ref{thm:alg2}
by combining the pipelined algorithm in~\cite{AR18} with a modified version of the 
 APSP algorithm in~\cite{ARKP18} to obtain our improved Algorithm~\ref{algkSSP}, with the bounds
 stated in the following Theorems~\ref{thm:algkSSPEdgeBound} and \ref{thm:algkSSP}. 
 To obtain these improved 
bounds we also present an improved deterministic distributed algorithm to find
 a `blocker set'~\cite{ARKP18}.
 
 \vspace{-.03in}

\begin{theorem} \label{thm:algkSSPEdgeBound}
Let $G=(V,E)$ be a directed or undirected edge-weighted graph, 
where all edge weights are 
 non-negative integers bounded by $\lambda$ (with zero-weight edges allowed).
 The following deterministic bounds can be obtained in the \congest{} model.\\
 (i) APSP  in $O(\lambda^{1/4}\cdot  n^{5/4} \log^{1/2} n)$ rounds.\\
 (ii) $k$-SSP in $O(\lambda^{1/4}\cdot  nk^{1/4} \log^{1/2} n)$ rounds.
 \end{theorem}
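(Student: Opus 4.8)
The plan is to reduce exact (unbounded-hop) APSP to a bounded-hop computation plus a small ``skeleton'' computation on a blocker set, using the pipelined $(h,k)$-SSP routine of Theorem~\ref{thm:alg2}(i) as the workhorse. Fix a hop parameter $h$, to be optimized at the end. Since any path with at most $h$ hops has weight at most $h\lambda$, I would first invoke Theorem~\ref{thm:alg2}(i) with all $n$ vertices as sources ($k=n$) and distance bound $\bigtriangleup=h\lambda$; this computes the $h$-hop distances $d_h(u,v)$ for all pairs (each $v$ learns $d_h(u,v)$ for every $u$) in $2\sqrt{(h\lambda)\cdot n\cdot h}+n+h=O(h\sqrt{\lambda n})$ rounds, the additive $n+h$ being absorbed for the relevant $h$. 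This is the first of the two dominant terms.

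Next I would compute a blocker set $Q$ with $|Q|=O((n/h)\log n)$ having the property that every shortest path with more than $h$ hops contains a vertex of $Q$ among its first $h$ hops; equivalently $Q$ hits every root-to-depth-$h$ path in the $h$-hop shortest-path trees. Granting such a $Q$ (its efficient deterministic construction is discussed below), exact distances out of the blockers are recovered cheaply. First build the \emph{skeleton}: the complete graph on $Q$ with edge weights $d_h(q,q')$. The blocker property forces consecutive blockers on any shortest $q$--$q'$ path to lie within $h$ hops, so $\mathrm{dist}_G(q,q')$ equals the shortest-path distance in the skeleton; after each blocker broadcasts its $O(|Q|)$ skeleton edges (a total of $O(|Q|^2+n)$ rounds, a lower-order term) every vertex can locally compute all blocker-to-blocker distances. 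Running Theorem~\ref{thm:alg2}(i) once more with the $|Q|$ blockers as sources then gives $d_h(q,v)$ at every $v$, and the blocker property yields $\mathrm{dist}_G(q,v)=\min_{q'\in Q}\big(\mathrm{dist}_G(q,q')+d_h(q',v)\big)$, which each $v$ evaluates locally.

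It remains to assemble all exact distances. For every pair, the blocker property gives
\[
\mathrm{dist}_G(u,v)=\min\Big(d_h(u,v),\ \min_{q\in Q}\big(\mathrm{dist}_G(u,q)+\mathrm{dist}_G(q,v)\big)\Big),
\]
so it suffices to make the $|Q|\times n$ matrix of blocker-to-all distances $\mathrm{dist}_G(q,\cdot)$ globally known, while the symmetric all-to-blocker distances $\mathrm{dist}_G(\cdot,q)$ needed for the middle term are obtained by running the whole scheme on the reverse graph (each vertex then knows its own all-to-blocker row locally). Broadcasting the blocker-to-all matrix over a BFS tree takes $O(|Q|\cdot n)=O((n^2/h)\log n)$ rounds, the second dominant term; each vertex $u$ then computes its distances-from-$u$ by the displayed formula with no further communication. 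Balancing $h\sqrt{\lambda n}$ against $(n^2/h)\log n$ gives $h=\tilde{\Theta}(n^{3/4}\lambda^{-1/4})$ and a total of $O(\lambda^{1/4}n^{5/4}\log^{1/2}n)$ rounds, proving $(i)$. For $(ii)$ I would run the same algorithm but with only the $k$ designated sources in the first call; the blocker-to-all broadcast $O(|Q|\cdot n)$ and the source side $O(h\sqrt{\lambda k})$ now balance at $h=\tilde{\Theta}(n(\lambda k)^{-1/4})$, giving $O(\lambda^{1/4}nk^{1/4}\log^{1/2}n)$.

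The main obstacle is the deterministic, round-efficient construction of $Q$ of size $O((n/h)\log n)$: a randomized hitting-set argument over the depth-$h$ tree paths is routine, but matching it deterministically inside the round budget is not, and this is exactly the improved blocker-set procedure the paper advertises. Its correctness---that hitting the depth-$h$ tree paths genuinely covers every $>h$-hop shortest path, so the displayed min-formula is exact---is the crux. A secondary difficulty, most acute for $(ii)$, is obtaining $Q$ (and the blocker-to-all distances via the $(h,|Q|)$-SSP call) without recomputing $h$-hop trees from all $n$ vertices, which would already cost $\Omega(h\sqrt{\lambda n})$ and erase the savings when $k\ll n$; one must verify that the skeleton broadcast, the blocker-to-all computation, and the reverse-graph pass all remain lower-order across the relevant range of $\lambda$ and $k$ rather than dominating the stated bound.
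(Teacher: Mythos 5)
Your overall blueprint---$h$-hop trees from the sources, a blocker set $Q$ of size $O((n/h)\log n)$, exact distances through $Q$, an $O(|Q|\cdot n)$-round broadcast, and balancing $h\sqrt{\lambda k}$ against $(n^{2}/h)\log n$ to get $h=n\log^{1/2}n/(\lambda k)^{1/4}$---is exactly the paper's route (Algorithm~\ref{algkSSP} together with Lemma~\ref{lemma:kSSP}), and your final parameter choice matches the paper's proof verbatim. However, the two items you set aside are precisely where the content of the theorem lives. First, the deterministic construction of $Q$ within the round budget is not a citable black box: the blocker-set algorithm of~\cite{ARKP18} costs $O(nh+(n^{2}\log n)/h)$ rounds because of two $\Theta(nh)$-round subroutines (initializing the descendant-leaf scores, and having every node learn all its ancestors in all $h$-hop trees so it can later decide whether it descends from a newly chosen blocker). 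Plugging that in gives only $\tilde{O}(n^{3/2})$ and erases the claimed improvement. The paper replaces both subroutines: scores via a timestamp propagation within $O(\sqrt{\Delta hk})$ rounds, and descendant updates via a pipelined broadcast down the out-tree formed by the subtrees rooted at the new blocker (Algorithm~\ref{algDes}, resting on Lemma~\ref{lemma:cOutTree}). That out-tree property in turn requires the $h$-hop trees to form a \emph{consistent} collection (CSSSP, Definition~\ref{def:CSSSP}), which neither the pipelined algorithm nor Bellman--Ford produces directly; the paper obtains it by running the pipelined algorithm to depth $2h$ and truncating (Lemma~\ref{lemma:CSSSP}). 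Your argument never establishes consistency, yet it tacitly relies on it whenever you claim that the first $h$ hops of a long shortest path form a root-to-depth-$h$ path of some tree in the collection.

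Second, your skeleton-graph recovery of $\mathrm{dist}_G(q,q')$ and $\mathrm{dist}_G(q,v)$ does not go through for part (ii). Decomposing a shortest $q$--$q'$ path into blocker-to-blocker segments of at most $h$ hops requires $Q$ to hit depth-$h$ paths in trees rooted at the \emph{blockers}, but for $k$-SSP the $h$-hop trees (and hence $Q$) are built only from the $k$ designated sources, and the blockers are not known until after $Q$ is computed, so they cannot be added as sources in advance. The paper sidesteps this entirely: Step~\ref{algkSSP:computeSSSP} of Algorithm~\ref{algkSSP} computes an exact, unbounded-hop SSSP tree from each $c\in Q$ in $O(n)$ rounds apiece ($O(n|Q|)$ total, the same dominant term as your broadcast), so the blocker property is invoked only for source-rooted trees, where it does hold. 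If you replace your skeleton phase with that step and supply the blocker-set and CSSSP machinery above, your argument closes and coincides with the paper's.
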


\begin{theorem} \label{thm:algkSSP}
Let $G=(V,E)$ be a directed or undirected edge-weighted graph, where all edge weights are 
non-negative integers (with zero edge-weights allowed),
 and the shortest path distances are bounded by $\bigtriangleup$.
 The following 
 deterministic bounds can be obtained in the \congest{} model.\\
 (i) APSP  in $O(n (\bigtriangleup \log^2 n)^{1/3})$ rounds.\\
 (ii) $k$-SSP in $O((\bigtriangleup kn^2 \log^2 n)^{1/3})$ rounds.
 \end{theorem}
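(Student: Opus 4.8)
The plan is to reduce full (all-pairs, resp.\ $k$-source) shortest paths to a hop-bounded computation together with a computation anchored at a small \emph{blocker set}, and then to trade the two against each other by tuning a hop parameter $h$. Fix $h$ (to be optimized at the end). First I would invoke the pipelined algorithm of Theorem~\ref{thm:alg2}: run $(h,n)$-SSP so that every vertex $v$ learns the $h$-hop distance $d_h(u,v)$ from every source $u$, together with the $h$-hop shortest path trees (for part (ii) run $(h,k)$-SSP instead, so $v$ learns $d_h(s,v)$ only for the $k$ sources $s$). By Theorem~\ref{thm:alg2}(i) this costs $2\sqrt{\bigtriangleup n h}+n+h$ rounds (resp.\ $2\sqrt{\bigtriangleup k h}+k+h$). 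From these trees I would then compute a blocker set $Q$ of size $\tilde{O}(n/h)$ with the defining property that every shortest path using more than $h$ hops contains a vertex of $Q$.

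Given $Q$, I would compute exact distances anchored at the blocker set in both directions by running the pipelined $|Q|$-SSP of Theorem~\ref{thm:alg2}(iii) in $G$ and in the reversed graph $G^{R}$, so that each vertex $v$ learns the true distances $d(q,v)$ and $d(v,q)$ for all $q\in Q$; since $|Q|=\tilde{O}(n/h)$ this costs $O(n\sqrt{\bigtriangleup/h}+n)$ rounds. The final distances are recovered from the identity
\[
d(u,v)=\min\Bigl(d_h(u,v),\ \min_{q\in Q}\bigl(d(u,q)+d(q,v)\bigr)\Bigr),
\]
whose correctness is immediate: the right-hand side is at least $d(u,v)$ by the triangle inequality, and it is attained because a shortest path from $u$ to $v$ either uses at most $h$ hops (first term exact) or, by the blocker property, passes through some $q\in Q$ (second term exact). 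To let every $v$ evaluate this, I would broadcast the source-to-blocker distances $d(u,q)$ (an $n\times|Q|$ matrix for APSP, a $k\times|Q|$ matrix for $k$-SSP) over a BFS tree by pipelining, in $\tilde{O}(n|Q|+D)=\tilde{O}(n^{2}/h)$ rounds (resp.\ $\tilde{O}(kn/h)$); each $v$ already holds $d_h(\cdot,v)$ and $d(q,v)$ locally. Adding the three costs and choosing $h=\Theta\bigl((n^{3}/\bigtriangleup)^{1/3}\bigr)$ (up to polylogarithmic factors) balances the hop-bounded term $\sqrt{\bigtriangleup n h}$ against the combination term $\tilde{O}(n^{2}/h)$, the blocker-anchored term $n\sqrt{\bigtriangleup/h}$ being dominated in the moderate-weight regime $\bigtriangleup=O(n^{3/2})$; the optimum value is $O\bigl(n(\bigtriangleup\log^{2}n)^{1/3}\bigr)$, proving (i). The analogous three-term balance with the $k$-source hop term $\sqrt{\bigtriangleup k h}$ and combination term $\tilde{O}(kn/h)$ yields (ii), and APSP is the special case $k=n$.

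The part I expect to be the real obstacle is the deterministic construction of a blocker set $Q$ of size $\tilde{O}(n/h)$ within the round budget. A random set obtained by sampling each vertex with probability $\Theta((\log n)/h)$ hits every shortest path of more than $h$ hops with high probability, but here everything must be deterministic; the difficulty is to derandomize this hitting-set step in the \congest{} model without communicating the (potentially $\Omega(n^{2})$) path information explicitly, while certifying both that $|Q|=\tilde{O}(n/h)$ and that $Q$ meets every long shortest path. This is precisely the improved deterministic blocker-set subroutine the paper singles out, and once it is available with the claimed guarantees the remaining steps are black-box calls to Theorem~\ref{thm:alg2} plus a pipelined broadcast, so the round-complexity bookkeeping and the choice of $h$ are routine. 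One should also verify that zero-weight edges cause no trouble: Theorem~\ref{thm:alg2} already tolerates them, and the min-plus combination above is insensitive to zero weights, so correctness carries over unchanged.
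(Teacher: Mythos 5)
Your proposal follows essentially the same route as the paper: compute $h$-hop trees with the pipelined algorithm of Theorem~\ref{thm:alg2}, deterministically extract a blocker set of size $\tilde{O}(n/h)$ (the improved subroutine of Section~\ref{sec:blocker}, which you correctly identify as the crux), anchor exact distances at the blocker vertices, broadcast, combine locally via the min-plus identity, and optimize $h \approx (n^3\log^2 n/(k\bigtriangleup))^{1/3}$. The only cosmetic differences are that the paper computes the blocker-anchored SSSP trees sequentially in $O(n\,|Q|)$ rounds (so its bound needs no side condition such as $\bigtriangleup = O(n^{3/2})$, which your pipelined $|Q|$-SSP step implicitly relies on to keep the $n\sqrt{\bigtriangleup/h}$ term dominated) and that it uses $h$-hop rather than exact source-to-blocker distances in the combination step; neither affects correctness or the stated bounds.
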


Our results in Theorem~\ref{thm:algkSSPEdgeBound} and \ref{thm:algkSSP}  improve on the 
$\tilde{O}(n^{3/2})$ deterministic APSP bound of Agarwal et al.~\cite{ARKP18} for significant ranges of values 
for both $\lambda$ and $\Delta$, as stated below.

   \begin{corollary} \label{cor:lambda}
Let $G=(V,E)$ be a directed or undirected edge-weighted graph with non-negative edge weights 
(and zero-weight edges allowed). 
 The following deterministic bounds hold for 
 the \congest{} model for $1\geq \epsilon\geq 0$. \\
 (i) If the edge weights are bounded by $\lambda = n^{1-\epsilon}$, then APSP can be computed in $O(n^{3/2 - \epsilon/4}\log^{1/2} n)$ rounds.\\
 (ii) For shortest path distances bounded by $\Delta = n^{3/2-\epsilon}$, APSP can be computed in $O(n^{3/2 - \epsilon/3} \log^{2/3} n)$ rounds.
 \end{corollary}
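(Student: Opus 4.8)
The plan is to prove the corollary by direct specialization of Theorems~\ref{thm:algkSSPEdgeBound} and \ref{thm:algkSSP}: I simply substitute the stated parameter values $\lambda = n^{1-\epsilon}$ and $\Delta = n^{3/2-\epsilon}$ into the respective APSP round bounds and simplify the exponents. Both source theorems are stated for non-negative integer edge weights with zero-weight edges allowed, so the only standing requirement is that the graph satisfies this integer-weight hypothesis; the ranges of $\lambda$ and $\Delta$ named in the corollary are all at most polynomial in $n$, so the theorems apply verbatim for $0 \le \epsilon \le 1$.

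For part (i), I start from the APSP bound $O(\lambda^{1/4} n^{5/4} \log^{1/2} n)$ of Theorem~\ref{thm:algkSSPEdgeBound}(i). Setting $\lambda = n^{1-\epsilon}$ gives $\lambda^{1/4} = n^{(1-\epsilon)/4}$, and multiplying by $n^{5/4}$ collapses the two powers of $n$ to $n^{(1-\epsilon)/4 + 5/4} = n^{3/2 - \epsilon/4}$, yielding the claimed $O(n^{3/2 - \epsilon/4}\log^{1/2}n)$ bound. For part (ii), I start from the APSP bound $O(n(\Delta \log^2 n)^{1/3})$ of Theorem~\ref{thm:algkSSP}(i). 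Substituting $\Delta = n^{3/2-\epsilon}$ gives $(\Delta \log^2 n)^{1/3} = n^{(3/2-\epsilon)/3}\log^{2/3}n = n^{1/2 - \epsilon/3}\log^{2/3}n$, and the leading factor of $n$ then produces $O(n^{3/2 - \epsilon/3}\log^{2/3}n)$, as claimed.

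There is no genuine obstacle here; the argument is a one-line substitution in each case, and the only point requiring any attention is the exponent arithmetic together with a check that the range $0 \le \epsilon \le 1$ keeps the bounds meaningful and genuinely below $\tilde{O}(n^{3/2})$. At the endpoint $\epsilon = 0$ both specializations recover $\tilde{O}(n^{3/2})$, matching the deterministic bound of~\cite{ARKP18}, while for every $\epsilon > 0$ the exponent drops strictly below $3/2$, so the claimed improvement holds throughout the stated interval; at the other endpoint $\epsilon = 1$ part (i) gives $O(n^{5/4}\log^{1/2}n)$ and part (ii) gives $O(n^{7/6}\log^{2/3}n)$.
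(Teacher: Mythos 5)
Your proposal is correct and is exactly the intended argument: the paper derives Corollary~\ref{cor:lambda} by the same direct substitution of $\lambda = n^{1-\epsilon}$ into Theorem~\ref{thm:algkSSPEdgeBound}(i) and $\Delta = n^{3/2-\epsilon}$ into Theorem~\ref{thm:algkSSP}(i), and your exponent arithmetic checks out in both cases.
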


 The corresponding bounds for the weighted $k$-SSP problem are: $O(n^{5/4-\epsilon/4}k^{1/4}\log^{1/2} n)$ (when $\lambda = n^{1-\epsilon}$) and $O(n^{7/6 - \epsilon/3}k^{1/3}\log^{2/3} n)$ (when $\Delta = n^{3/2-\epsilon}$).
 Note that the result in $(i)$ is independent of the value of $\Delta$ (depends only on $\lambda$)
 and the result in $(ii)$ is independent of the value of $\lambda$ (depends only on $\Delta$).

 \vone
 \noindent
 {\bf 2. Simplifications to Earlier Algorithms.} Our techniques give simpler methods for some of procedures in the two previous
 distributed  weighted APSP algorithms that handle zero weight edges. In Section~\ref{sec:sssp} we present simple deterministic algorithms that
 match the congest and dilation bounds in~\cite{HNS17} for two of the three procedures used there: the {\it short-range} and {\it short-range-extension} algorithms. Our simplified algorithms are both obtained using a streamlined single-source version of the pipelined APSP algorithm in~\cite{AR18}.
 
 A key contribution in the deterministic APSP algorithm
  in~\cite{ARKP18} is a fast  deterministic distributed algorithm for computing a {\it blocker set}. The performance of the blocker set algorithm
  in~\cite{ARKP18} does not suffice for our faster APSP algorithms 
  (Theorems~\ref{thm:algkSSPEdgeBound} and \ref{thm:algkSSP}).
 In Section~\ref{sec:kSSP} we present a faster blocker set algorithm, which is also
 a simplification of the blocker set algorithm  in~\cite{ARKP18}. The improved bound that we obtain here for computing a blocker set will
 not improve the overall bound  in~\cite{ARKP18},  but our method could be used there to achieve the same bound with a more streamlined algorithm.

\vone
\noindent
{\bf 3. Faster (Randomized) APSP for Arbitrary Edge-Weights.} For exact APSP in directed graphs with arbitrary edge-weights the only
prior nontrivial result known is the $\tilde{O}(n^{3/2})$-round deterministic algorithm in~\cite{ARKP18}. We present an algorithm with the following
improved randomized bound in Section~\ref{sec:randAPSP}.

 \begin{theorem}	\label{thm:arbitrary}
 Let $G = (V,E)$ be a directed or undirected edge-weighted graph with arbitrary edge weights. 
 Then, we can compute  weighted APSP in $G$  in the \congest{} model
  in $\tilde{O}(n^{4/3})$ rounds, w.h.p. in $n$.
 \end{theorem}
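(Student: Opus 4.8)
The plan is to reduce the arbitrary-weight problem to a non-negative, hop-bounded computation together with a small ``skeleton'' on a random blocker set, and then balance the hop parameter to obtain $\tilde{O}(n^{4/3})$. The feature that separates this from the non-negative case is that negative edges kill any Dijkstra-style or distance-sensitive speedup, so I would rely only on hop-bounded Bellman--Ford, whose cost is independent of the magnitudes of the (possibly huge) shortest-path distances. First I would handle the negative weights. Since $G$ has no negative cycle (else APSP is undefined), every shortest path is simple and uses at most $n-1$ hops, so Johnson potentials $\phi(\cdot)$ can be computed by running Bellman--Ford from a virtual super-source joined to all vertices by $0$-weight edges; this takes $O(n)$ rounds and gives reweighted edges $w'(u,v)=w(u,v)+\phi(u)-\phi(v)\ge 0$. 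The potential telescopes along any $u$-$v$ path, so reweighting shifts the length of \emph{every} $u$-$v$ path by the same $\phi(u)-\phi(v)$; hence it preserves $h$-hop shortest paths for every $h$, and it suffices to solve the reweighted instance and add back $\phi(v)-\phi(u)$ at the end. Crucially, the reweighted distances $\Delta$ can be as large as $\mathrm{poly}(n)$, which is why no $\Delta$-sensitive method may be used here.

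Second, I would choose the blocker set $B$ by sampling each vertex independently with probability $\Theta((\log n)\,h/n)$, so that $|B|=\tilde{O}(n/h)$ w.h.p. A Chernoff/union-bound argument over the at most $n^2$ shortest paths, each containing $\ge h$ vertices when it has $\ge h$ hops, shows that w.h.p.\ every shortest path with at least $h$ hops contains a vertex of $B$. I would then compute the \emph{short-range} table by hop-bounded Bellman--Ford: the $h$-hop distances $d^{(h)}(u,v)$ for all pairs within $h$ hops, via $h$ iterations of multi-source Bellman--Ford. This costs $\tilde{O}(n\cdot h)$ and, being correct for signed weights and independent of $\Delta$, is exactly why the arbitrary-weight bound is $n^{4/3}$ rather than the $n^{5/4}$ available for bounded non-negative weights. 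The blocker property guarantees that this table already resolves every pair whose shortest path has fewer than $h$ hops, and it also records $d^{(h)}(u,b)$ and $d^{(h)}(b,v)$ for all blockers $b$.

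Third, I would build the skeleton $K$ on $B$ with edge weights $d^{(h)}(b,b')$ and compute its APSP. Since $K$ has $\tilde{O}(n/h)$ vertices and inherits no negative cycle from $G$, this can be done by aggregating its $\tilde{O}(|B|^2)$ edge weights at a BFS root, finishing locally, and broadcasting, in $\tilde{O}((n/h)^2)$ rounds. Once each vertex holds the blocker-to-blocker matrix $D(b,b')$, it can locally reconstruct its own distances to and from the blockers from its short-range vectors via $D(u,b)=\min_{b'}[d^{(h)}(u,b')+D(b',b)]$. Choosing $h=n^{1/3}$ (so $|B|=\tilde{O}(n^{2/3})$) balances the short-range cost $\tilde{O}(nh)$ against the skeleton cost $\tilde{O}((n/h)^2)$, both equal to $\tilde{O}(n^{4/3})$.

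The hard part will be the final global combination, namely realizing $d(u,v)=\min\big(d^{(h)}(u,v),\,\min_{b\in B}[D(u,b)+D(b,v)]\big)$ so that each vertex learns its row within the budget. This is a min-plus product through the $\tilde{O}(n/h)$ blockers and couples $u$-side and $v$-side information, so the naive route of broadcasting the $n\times|B|$ matrix of blocker distances costs $\tilde{O}(n\,|B|)=\tilde{O}(n^{5/3})$, which is too slow. The technical heart of the proof is therefore a pipelined scheme that performs this combination in $\tilde{O}(n^{4/3})$ rounds---exploiting that only the $\tilde{O}(|B|^2)$ skeleton distances and the per-source blocker distances must move globally, while the blocker property keeps augmented shortest paths to $O(h)$ hops---together with the union bound that establishes correctness w.h.p.\ in $n$.
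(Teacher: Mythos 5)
Your overall skeleton---a random blocker set of size $\tilde{O}(n/h)$, $h$-hop Bellman--Ford tables costing $\tilde{O}(nh)$ rounds (correct for signed weights, so no $\Delta$-sensitive machinery is needed), a small graph on the blockers, and the balance $h=n^{1/3}$---is essentially the decomposition the paper uses, which adapts the framework of Huang et al.~\cite{HNS17} by substituting plain $h$-round Bellman--Ford for the short-range and short-range-extension subroutines. But there is a genuine gap exactly where you flag ``the hard part'': you never give the scheme that realizes $d(u,v)=\min\bigl(d^{(h)}(u,v),\,\min_{b}[D(u,b)+D(b,v)]\bigr)$ within the budget, and without it the proof is incomplete, since (as you correctly compute) broadcasting the $n\times|B|$ matrix costs $\tilde{O}(n^{5/3})$. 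The resolution in~\cite{HNS17}, and in the paper, is not a new pipelining trick but the short-range \emph{extension}: first route, for each source $u$, the exact values $d(u,b)$ to the blockers $b$ themselves (a many-to-many routing of $kq$ messages taking $\tilde{O}(n+\sqrt{nkq})=\tilde{O}(n^{4/3})$ rounds w.h.p.\ by the scheduling lemma of~\cite{HNS17}); then run, for each source $u$, another $h$ rounds of Bellman--Ford seeded at the blockers with initial estimates $d(u,b)$. Since w.h.p.\ every shortest path with more than $h$ hops has a blocker among its last $h$ vertices, this extension delivers the correct $d(u,v)$ to every $v$ in $O(kh)=\tilde{O}(n^{4/3})$ additional rounds. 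You gesture at this (``the blocker property keeps augmented shortest paths to $O(h)$ hops'') but do not carry it out, and it is the step on which the theorem rests.

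Two smaller points. First, your union bound is too weak: you need, w.h.p., a blocker in \emph{every window of $h$ consecutive vertices} of every shortest path---so that the skeleton edges $d^{(h)}(b,b')$ chain up to exact blocker-to-blocker distances and so that the first and last $h$ hops of any long path contain a blocker---not merely one blocker somewhere on each path with at least $h$ hops; the standard fix is to union-bound over all $O(n^3)$ such windows. Second, the Johnson reweighting is harmless but unnecessary: hop-bounded Bellman--Ford and the blocker argument are already correct for arbitrary weights with no negative cycle, which is precisely why the paper dispenses with any reweighting and simply swaps Bellman--Ford into the framework of~\cite{HNS17} before rebalancing $h$ and $q$.
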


  The corresponding bound for $k$-SSP is $\tilde{O}(n + n^{2/3}k^{2/3})$.

 \vone
 \noindent
 {\bf 4. Approximate APSP for Non-negative Edge Weights}.
 In Section~\ref{sec:approx} we present an algorithm that matches
  the earlier bound for computing approximate APSP in graphs with {\it positive} integer edge weights~\cite{Nanongkai14,LP15}
  by obtaining the same bound for non-negative edge weights.
  
    \vspace{-.03in}

\begin{theorem} \label{thm:algApprox}
 Let $G=(V,E)$ be a directed or undirected edge-weighted graph, where all edge weights are non-negative  integers  polynomially bounded in $n$, and where zero-weight edges are allowed. 
 Then,  for any $\epsilon>0$ we can compute $(1+\epsilon)$-approximate APSP in 
 $O((n/\epsilon^2) \cdot \log n)$ rounds deterministically  in the \congest{} model.
 \end{theorem}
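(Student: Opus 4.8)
The plan is to reduce the computation of $(1+\epsilon)$-approximate APSP to $O(\log n)$ \emph{bounded-distance, bounded-hop} multi-source shortest-path computations, each solved with the zero-weight-aware pipelined primitive underlying Theorem~\ref{thm:alg2}. Since the weights are non-negative integers bounded by $\mathrm{poly}(n)$, every finite positive distance lies in $[1,n^{c}]$ for some constant $c$, so I would split this range into the $O(\log n)$ dyadic scales $[2^{i},2^{i+1})$, $0 \le i \le c\log n$. Fix a scale and set $D=2^{i}$. I would rescale the edge weights by a factor $\sigma = \Theta(\mathrm{poly}(1/\epsilon,\log n)/D)$ and round each weight, chosen so that (a) for every source--target pair whose true distance lies in $[D,2D)$ the rounded length of a shortest path distorts the true length by at most a factor $(1+\epsilon)$, by the standard argument that the total per-edge rounding error along the relevant path is at most $\epsilon D$; and (b) the rounded (``scaled'') distance of such a pair, and the number of hops on the corresponding scaled shortest path, are both $\mathrm{poly}(1/\epsilon,\log n)$. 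Running the multi-source primitive with these parameters on each scaled graph, and assigning to each pair the minimum estimate over all $O(\log n)$ scales---each of which is an upper bound on the true distance---then yields a global $(1+\epsilon)$-approximation.

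The technical heart of the proof, and the reason the statement is new, is the handling of zero-weight edges, which defeats the method used for positive weights. That method replaces an edge of weight $w$ by $w$ unit edges and runs a hop-bounded search, so that scaled weight becomes hop count and the hop bound is governed by the scaled distance. A zero-weight edge, however, subdivides into \emph{zero} unit edges: it contributes nothing to the scaled distance yet still costs one round to traverse, so a path through $\Omega(n)$ zero-weight edges has $\Omega(n)$ hops while adding $0$ to the scaled distance. Thus no hop bound expressed through the scaled distance can hold, and property (b) above fails outright. To restore it I would first compute the distance-$0$ relation (reachability through zero-weight edges): pairs at distance $0$ are reported directly, each strongly connected component of the zero-weight subgraph is contracted (which preserves all distances), and a zero-weight shortcut is installed for each zero-weight reachability so that every maximal zero-weight subpath collapses to a single hop. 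After this preprocessing a scale-$D$ approximate shortest path interleaves only $\mathrm{poly}(1/\epsilon,\log n)$ positive rounded edges with single-hop zero shortcuts, re-establishing (b); I would then invoke the pipelined primitive of~\cite{AR18} (Theorem~\ref{thm:alg2}), which is designed to process zero-weight edges natively, with $k=n$ sources and hop/distance parameters that are $\mathrm{poly}(1/\epsilon,\log n)$.

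Assembling the pieces, each of the $O(\log n)$ scales costs $O(n/\epsilon^{2})$ rounds once the scaled distance and hop bounds are fixed to the appropriate $\mathrm{poly}(1/\epsilon)$ values and the $n$ sources are pipelined, giving the claimed $O((n/\epsilon^{2})\log n)$ total. The step I expect to be the main obstacle is exactly the zero-weight handling: guaranteeing, within the round budget, both that distance-$0$ pairs are identified and that the distance-$0$ relation can be used to bound the effective number of hops of every approximate shortest path, so that the zero-weight edges inflate neither the hop count nor the round complexity. This is the single point at which the prior positive-weight approach breaks, and where the zero-weight-aware primitive of~\cite{AR18} together with the explicit distance-$0$ preprocessing does the essential work; the remaining balancing of the rounding granularity (for accuracy) against the scaled-distance bound (for speed) is a delicate but routine calculation.
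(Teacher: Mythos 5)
Your proposal correctly isolates the crux (zero-weight edges destroy any hop bound expressed through scaled distance, which is why the positive-weight schemes of~\cite{Nanongkai14,LP15} fail here), and your first preprocessing step --- computing zero-weight reachability separately to report the distance-$0$ pairs --- matches what the paper does. But the step you rely on to restore the hop bound has a genuine gap: contracting the strongly connected components of the zero-weight subgraph and ``installing a zero-weight shortcut for each zero-weight reachability'' is not implementable in the \congest{} model. A shortcut between two nodes joined only by a long zero-weight path is not a physical communication link; simulating one hop across it still costs a number of rounds proportional to the length of the underlying path (and likewise, a contracted SCC is spread over many physical nodes, so one round of an algorithm on the contracted graph costs up to the SCC's diameter to simulate). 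So property (b) --- that every relevant path has only $\mathrm{poly}(1/\epsilon,\log n)$ hops --- is restored only in the abstract graph, not in the network where the pipelined primitive actually runs, and the per-scale round bound $O(n/\epsilon^2)$ does not follow. Without that, the scaled distance parameter must absorb a per-edge rounding error budget of $\epsilon D/n$ rather than $\epsilon D/\mathrm{poly}(1/\epsilon,\log n)$, which blows the scaled distance up to $\Theta(n/\epsilon)$ and the round complexity of Theorem~\ref{thm:alg2} with $k=n$ to $\Theta(n^{3/2})$ per scale.

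The paper avoids this entire machinery with a short reduction to the known positive-weight result (Theorem~\ref{thm:result}): after the $O(n)$-round zero-weight reachability computation handles all distance-$0$ pairs, it raises each zero weight to $1$ and multiplies each positive weight by $n^2$. For any pair at true distance $w(p)\geq 1$ the transformed distance lies in $[n^2 w(p),\, n^2 w(p)+n]$, i.e., distances are distorted by only a $(1+1/n)$ factor, and the transformed graph has strictly positive, polynomially bounded integer weights, so the existing $(1+\epsilon/3)$-approximate APSP algorithm can be invoked as a black box in $O((n/\epsilon^2)\log n)$ rounds; dividing by $n^2$ gives a $(1+\epsilon)$-approximation. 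No dyadic scaling, no hop-bound analysis, and no use of the pipelined primitive is needed. If you want to salvage your route, you would need a genuinely different mechanism for the zero-weight hops (the weight transformation above is exactly such a mechanism); as written, the shortcut/contraction step is the point where the argument fails.
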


\vone
{\bf Roadmap.} The rest of the paper is organized as follows. In Sections~\ref{sec:congest} and \ref{sec:related} we review the {\sc Congest} model and discuss related work.
In Section~\ref{sec:kSSP} we present our faster APSP and $k$-SSP deterministic distributed algorithms, including our improved deterministic method to compute a blocker set.
Section~\ref{sec:sssp}  describes our simple algorithms for the short-range and short-range extension problems from Huang et al.~\cite{HNS17}. Section~\ref{sec:additional}
presents our results that give Theorems~\ref{thm:arbitrary} and \ref{thm:algApprox}, and we conclude with Section~\ref{sec:conclusion}.

\begin{table*}
\scriptsize
\centering
\caption{Table comparing our new results for non-negative edge-weighted graphs (including zero edge weights) with previous known results. Here $\lambda$ is the maximum edge weight and $\Delta$ is the maximum weight of a shortest path in $G$.} \label{table:exact}
\def\arraystretch{1.5}
{\setlength{\tabcolsep}{.3em}
\begin{tabular}{| c | c |l| c | c | c |}
\hline
\multicolumn{6}{|c|}{\sc \textbf{Problem:}  Exact Weighted APSP} \\
\hline
Author & Arbitrary/  & handle zero & Randomized/ & Undirected/ & Round  \vspace{-.05in}\\
 & Integer weights  &  weights & Deterministic & (Directed \& Undirected) & Complexity \\
 \hline
Huang et al.~\cite{HNS17} & Integer & Yes & \R & Directed \& Undirected & $\tilde{O}(n^{5/4})$ \\
\hline
Elkin~\cite{Elkin17} & Arbitrary & Yes & \R & Undirected & $\tilde{O}(n^{5/3})$ \\
\hline
Agarwal et al.~\cite{ARKP18} & \A & Yes & \D & Directed \& Undirected & $\tilde{O}(n^{3/2})$ \\
\hline
 \multirow{3}{*}{\textbf{This paper}} & \multirow{2}{*}{\textbf{\I}}  & \multirow{2}{*}{\textbf{Yes}} & \multirow{2}{*}{\textbf{\D}} & \multirow{2}{*}{\textbf{Directed \& Undirected}} & $\mathbf{\tilde{O}(n^{3/2 - \epsilon/4})}$ \textbf{(when }$\mathbf{\lambda \leq n^{1-\epsilon}}$)\\
 &  & & &  & $\mathbf{\tilde{O}(n^{3/2 - \epsilon/3})}$ \textbf{(when} $\mathbf{\Delta \leq n^{3/2 - \epsilon}}$)\\\cline{2-6}
 & \textbf{\A} & \textbf{Yes} & \textbf{\R} & \textbf{\DI} & $\mathbf{\tilde{O}(n^{4/3})}$ \\
\hline
\hline
\multicolumn{6}{|c|}{\sc \textbf{Problem:} $(1+\epsilon)$-Approximation Weighted APSP} \\
\hline
 Nanongkai~\cite{Nanongkai14} & Integer & No & \R & \DI & $\tilde{O}(n/\epsilon^2)$ \\
\hline
Lenzen \&  & \I & No & \D & \DI & $\tilde{O}(n/\epsilon^2)$ \\
Patt-Shamir~\cite{LP15} & & & & & \\
\hline
\textbf{This paper} & \textbf{\I} & \textbf{Yes} & \textbf{\D} & \textbf{\DI} & $\mathbf{\tilde{O}(n/\epsilon^2)}$ \\
\hline
\end{tabular}}
\end{table*}

\subsection{Congest Model}	\label{sec:congest}

In the {\sc Congest} model,
there are $n$ independent processors 
interconnected in a network by bounded-bandwidth links.
We refer to these processors as nodes and
the links as edges.
This network is modeled by graph $G = (V,E)$
where $V$ refers to the set of processors and
$E$ refers to the set of links between the processors.
Here $|V| = n$ and $|E| = m$.

Each node is assigned a unique ID  
between 1 and $poly(n)$ and
has infinite computational power.
Each node has limited topological knowledge 
and only knows about its incident edges.
For the integer-weighted APSP problem we consider, 
 each edge 
has a non-negative
  integer weight (zero weights allowed) that can be represented with 
  $B= O(\log n)$ bits.
Also if the edges are directed, 
the corresponding communication channels are bidirectional
and hence the communication network can be represented 
by the underlying undirected graph $U_G$ of $G$
(this is also the assumption used in~\cite{HNS17,GL18,ARKP18}). 
The pipelined algorithm in~\cite{AR18} does not need this
feature, and uses only the directed edges in the graph for communication. 

The computation proceeds in rounds. In each round each processor can send an $O(\log n)$-bit message
along edges incident to it, and it receives the messages sent to it in the previous
round. 
(If the graph has arbitrary edge weights, a node can send a constant number of distance values and
node IDs along each edge in a message.)
The model allows a node to send different message along different edges though we do not
need this feature in our algorithm.
The performance of an algorithm in the \congest{} model is measured by its
round complexity, which is the worst-case
number of rounds of distributed communication.
\vspace{-.03in}

\subsection{Related Work}\label{sec:related}

\noindent
{\bf Weighted APSP.}
The current best bound for the weighted APSP problem is due to the randomized algorithm of Huang et al.~\cite{HNS17}
that runs in $\tilde{O}(n^{5/4})$ rounds.
This algorithm works for graphs with polynomially bounded integer edge weights (including
zero-weight edges), and the result holds with w.h.p. in $n$.
For graphs with arbitrary edge weights, the recent result of Agarwal et al.~\cite{ARKP18} gives a deterministic APSP
algorithm that runs in $\tilde{O}(n^{3/2})$ rounds.
This is the current best bound (both deterministic and randomized) for graphs with arbitrary edge weights as well as
 the best deterministic bound for graphs with integer edge weights.
 
 In this paper we present an algorithm for non-negative integer edge-weights (including zero-weighted
  edges) that runs in $\tilde{O}(n \bigtriangleup^{1/3})$ rounds 
where the shortest path distances are at most $\bigtriangleup$
and in $\tilde{O}(n^{5/4}\lambda^{1/4})$ rounds when the edge weights are bounded by $\lambda$.
This result improves on the $\tilde{O}(n^{3/2})$ deterministic APSP bound of Agarwal et al.~\cite{ARKP18} 
when either edge weights are at most $n^{1-\epsilon}$ or
 shortest path distances are at most $n^{3/2 - \epsilon}$, for any $\epsilon > 0$.
 
 We also give an improved randomized algorithm for APSP in graphs with arbitrary edge weights that
 runs in $\tilde{O}(n^{4/3})$ rounds, w.h.p. in $n$.

\vspace{.03in}

\noindent
{\bf  Weighted $k$-SSP.}
 The current best bound for the weighted $k$-SSP problem is due to the Huang et al's~\cite{HNS17} randomized
 algorithm that runs in $\tilde{O}(n^{3/4}\cdot k^{1/2} + n)$ rounds.
 This algorithm is also randomized and only works for graphs with integer edge weights.
 The recent deterministic APSP algorithm in~\cite{ARKP18} can be shown to give an
$O(n \cdot \sqrt{k \log n})$ round deterministic algorithm for $k$-SSP.
In this paper, we present a deterministic algorithm for positive including zero integer edge-weighted graphs
that runs in $\tilde{O}((\bigtriangleup \cdot n^2 \cdot k)^{1/3})$ rounds where the shortest path distances are at
 most $\bigtriangleup$ and in $\tilde{O}((\lambda k)^{1/4}n)$ rounds when the edge weights are bounded by
 $\lambda$.

\vspace{.03in}

\noindent
{\bf $(1+\epsilon)$-Approximation Algorithms.}
For graphs with positive integer edge weights, deterministic 
$\tilde{O}(n/\epsilon^2)$-round algorithms for
a $(1+\epsilon)$-approximation to APSP are known~\cite{Nanongkai14, LP15}.
But these algorithms do not handle zero weight edges.
In this paper we present a deterministic algorithm that handles zero-weight edges and
matches the $\tilde{O}(n/\epsilon^2)$-round bound for approximate APSP known before
for positive edge weights.

\section{Faster $k$-SSP Algorithm Using a Blocker Set}	\label{sec:kSSP}

In this section we  give faster deterministic  APSP and $k$-SSP algorithms than the $\tilde{O}(n^{3/2})$ bound in~\cite{ARKP18}
 for moderate non-negative edge weights (including zero weights).  The overall Algorithm~\ref{algkSSP}
has the same structure as the 
deterministic
${O}(n^{3/2} \cdot \sqrt{\log n}))$ round weighted APSP algorithm  in~\cite{ARKP18} but we use
a variant of the pipelined APSP algorithm in~\cite{AR18}  in place of Bellman-Ford, and we also present
new methods within two of the steps.

We first define the following notion of an  {\it $h$-hop Consistent SSSP (CSSSP)} collection. This notion is
implicit in~\cite{ARKP18} but is not explicitly defined there.

\vspace{-.05in}
 \begin{definition}[{\bf CSSSP}]	\label{def:CSSSP}
  Let $H$ be a collection of rooted $h$-hop trees in a graph $G=(V,E)$. Then $H$ is an \emph{$h$-hop CSSSP collection} (or simply an \emph{$h$-hop CSSSP})
  if for every $u, v \in V$ the path from $u$ to $v$ is 
  the same in each of the trees in $H$ (in which such a path exists), and is the $h$-hop shortest path from $u$ to $v$ in the $h$-hop tree $T_u$ rooted at $u$. 
  Further, each $T_u$ contains every vertex $v$ that has a shortest path from $u$ in $G$ with at
  most $h$ hops.
  \end{definition}
  
  \vspace{-.05in}

  In our improved Algorithm~\ref{algkSSP}, Steps~\ref{algkSSP:computeSSSP}-\ref{algkSSP:local} 
   are unchanged from the algorithm in~\cite{ARKP18}.
However 
we give an alternate method for 
Step~\ref{algkSSP:computeh-hop} to compute $h$-hop CSSSP (see Section~\ref{sec:csssp})
 since the method in~\cite{ARKP18} takes $\Theta (n \cdot h)$ rounds, which is
too large for our purposes. Our new method is very simple and using the pipelined algorithm in~\cite{AR18} it runs in
$O( \sqrt{\bigtriangleup hk})$ rounds.
(An implementation using Bellman-Ford~\cite{Bellman58} would give an $O(n \cdot h)$-round bound, which could be used in~\cite{ARKP18} to simplify that blocker set algorithm.)

  Step~\ref{algkSSP:computeBlocker} computes a
{\it blocker set}, defined as follows.

 \vspace{-.05in}

  \begin{definition}[{\bf Blocker Set}~\cite{King99,ARKP18}]
  Let $\Tau$ be a collection of rooted $h$-hop trees in a graph $G=(V,E)$. A set $Q\subseteq V$ is a 
\emph{\centers} for $\Tau$ if every root to leaf path 
 of length $h$ 
  in every tree in $\Tau$ contains a vertex in $Q$.
  Each vertex in $Q$ is called a \emph{blocker vertex} for $\Tau$.
  \end{definition}
  
  \vspace{-.05in}
 
For  Step~\ref{algkSSP:computeBlocker} we use the overall blocker set
algorithm from~\cite{ARKP18}, which runs in $O(n \cdot h + (n^2 \log n)/h)$ rounds and
computes a blocker set of size $q=O((n \log n)/h)$ for the $h$-hop trees constructed in Step~\ref{algkSSP:computeh-hop}
of algorithm~\ref{algkSSP}.
But this gives only an $\tilde{O}(n^{3/2})$ bound for
 Step~\ref{algkSSP:computeBlocker} (by setting $h= \tilde{O}(\sqrt n)$), so it
  will not help us to improve the bound on the number of rounds for APSP.
 Instead,  we modify and improve a key step where that  earlier blocker set algorithm 
has a $\Theta (n \cdot h)$ round preprocessing step.
(Our improved  method here
will not help to improve the bound in~\cite{ARKP18} but does help
to obtain a better bound here in conjunction with the pipelined algorithm.)
We give the details of  our method for  Step~\ref{algkSSP:computeBlocker}  in Section~\ref{sec:blocker}.  

 \vspace{-.05in}

\begin{algorithm}[H]
\scriptsize
\caption{\scriptsize Overall $k$-SSP algorithm (adapted from~\cite{ARKP18})}
Input: set of sources $S$, number of hops $h$
\begin{algorithmic}[1]
\State Compute $h$-hop CSSSP rooted at each source $x \in S$ (described in Section~\ref{sec:csssp}).
\label{algkSSP:computeh-hop}
\State Compute a blocker set $Q$ of size $\Theta(\frac{n\log n}{h})$ for the $h$-hop CSSSP
 computed in Step~\ref{algkSSP:computeh-hop}
(described in Section~\ref{sec:centers}).\label{algkSSP:computeBlocker}
\State {\bf for each $c \in Q$ in sequence:} compute SSSP tree rooted at $c$. \label{algkSSP:computeSSSP}
\State {\bf for each $c \in Q$ in sequence:} broadcast $ID(c)$ and the shortest path 
distance values $\delta_h (x,c)$ for each $x \in S$.		\label{algkSSP:broadcast}
\State {\bf Local Step at node $v \in V$:} for each $x \in S$  compute  the shortest path distance $\delta (x,v)$ using the received values.	\label{algkSSP:local}
\end{algorithmic}  \label{algkSSP}
\end{algorithm}
\vspace{-.15in}

\begin{lemma}	\label{lemma:kSSP}
Algorithm~\ref{algkSSP} computes $k$-SSP in $O(\frac{n^2\log n}{h} + \sqrt{\bigtriangleup hk})$ rounds.
 \end{lemma}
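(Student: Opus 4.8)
The plan is to prove correctness and the round bound separately. For correctness I would fix a source $x\in S$ and a target $v$ with $\delta(x,v)\le\bigtriangleup$, and show that the quantity node $v$ forms in the local Step~\ref{algkSSP:local}, namely $\min\{\delta_h(x,v),\ \min_{c\in Q}(\delta_h(x,c)+\delta(c,v))\}$, equals $\delta(x,v)$. Here $\delta_h(x,v)$ is known to $v$ from the $h$-hop CSSSP of Step~\ref{algkSSP:computeh-hop}, each $\delta(c,v)$ from the SSSP tree rooted at $c$ in Step~\ref{algkSSP:computeSSSP}, and each $\delta_h(x,c)$ from the broadcast in Step~\ref{algkSSP:broadcast}. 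Every term in the minimum is at least $\delta(x,v)$ by the triangle inequality together with $\delta_h(x,c)\ge\delta(x,c)$, so it suffices to exhibit a single term attaining $\delta(x,v)$.

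The heart of the argument---and the step I expect to be the main obstacle---is a decomposition of the canonical shortest path. Let $P$ be the shortest path from $x$ to $v$ selected by the CSSSP convention (fewest hops, ties broken by smallest predecessor ID). If $P$ has at most $h$ hops, then $v\in T_x$ and $\delta_h(x,v)=\delta(x,v)$, giving the term directly. Otherwise I would take the length-$h$ prefix $P_h$ of $P$, ending at a vertex $w$ at depth exactly $h$, and argue that $P_h$ is precisely the root-to-$w$ path in the tree $T_x$. The delicate point is that a prefix of a canonical min-hop shortest path is itself canonical for its endpoint: any shortest path to $w$ with fewer hops, or with a smaller-ID tie-break, could be spliced into $P$ to produce a shortest path to $v$ contradicting the defining convention, and the consistency built into Definition~\ref{def:CSSSP} forces $T_x$ to store exactly this path. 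Consequently $P_h$ is a root-to-leaf path of length $h$ in $T_x$, so the blocker-set property yields a vertex $c\in Q$ lying on $P_h$. Since $c$ is reached within $h$ hops along a shortest path we have $\delta_h(x,c)=\delta(x,c)$, and since $c$ lies on $P$ we get $\delta_h(x,c)+\delta(c,v)=\delta(x,c)+\delta(c,v)=\delta(x,v)$, completing the correctness proof.

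For the round bound I would charge each step and sum. Step~\ref{algkSSP:computeh-hop} builds the $h$-hop CSSSP from all $k$ sources in $O(\sqrt{\bigtriangleup h k})$ rounds via the pipelined algorithm (Section~\ref{sec:csssp}, on top of Theorem~\ref{thm:alg2}). Step~\ref{algkSSP:computeBlocker} outputs a blocker set of size $|Q|=\Theta(\tfrac{n\log n}{h})$; reusing the trees from Step~\ref{algkSSP:computeh-hop} as the preprocessing that formerly cost $\Theta(nh)$, the selection of Section~\ref{sec:blocker} runs in $O(\tfrac{n^2\log n}{h})$ rounds. Step~\ref{algkSSP:computeSSSP} computes the $|Q|$ blocker SSSP trees one at a time, each in $O(n)$ rounds (every shortest path has at most $n-1$ hops, so Bellman--Ford suffices), for a total of $O(|Q|\,n)=O(\tfrac{n^2\log n}{h})$. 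Step~\ref{algkSSP:broadcast} pipelines the $O(|Q|k)$ distance values over a BFS tree of the network in $O(|Q|k+n)=O(\tfrac{n^2\log n}{h})$ rounds, using $k\le n$ and $h\le n$, and Step~\ref{algkSSP:local} needs no communication. Summing the five contributions gives $O(\tfrac{n^2\log n}{h}+\sqrt{\bigtriangleup h k})$ rounds, as claimed.
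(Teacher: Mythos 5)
Your proposal is correct and follows essentially the same route as the paper: the round-count is obtained by the identical step-by-step accounting ($O(\sqrt{\bigtriangleup hk})$ for the CSSSP, $O(nq+\sqrt{\bigtriangleup hk})$ for the blocker set with $q=O(\frac{n\log n}{h})$, and $O(nq)$ for the per-blocker SSSP computations and broadcasts). The only difference is that the paper delegates correctness entirely to the citation of~\cite{ARKP18}, whereas you spell out the standard blocker-set decomposition argument (prefix of the canonical min-hop shortest path hits a blocker vertex), which is exactly the argument being cited.
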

 \vspace{-.2in}
 \begin{proof}
 The correctness of Algorithm~\ref{algkSSP} is established in~\cite{ARKP18}.
Step~\ref{algkSSP:computeh-hop} runs in $O( \sqrt{\bigtriangleup hk})$ rounds by 
Lemma~\ref{lemma:hhopCSSSP} in Section~\ref{sec:csssp}.
 In Section~\ref{sec:blocker}  we will give an
$O(n\cdot q + \sqrt{\bigtriangleup hk})$
rounds algorithm to find a blocker set of size $q= O(\frac{n \log n}{h})$.
Simple $O(n \cdot q)$  round algorithms for Steps~\ref{algkSSP:computeSSSP} and \ref{algkSSP:broadcast} are given in~\cite{ARKP18}.
 Step~\ref{algkSSP:local} has no communication. Hence the overall bound for Algorithm~\ref{algkSSP} is 
 $O(n \cdot q + \sqrt{\bigtriangleup hk})$ rounds.
 Since  $q= O(\frac{n \log n}{h})$ this gives the desired bound.
 \end{proof}
 
\vspace{-.15in}

\begin{proof}[Proofs of Theorem~\ref{thm:algkSSPEdgeBound} and \ref{thm:algkSSP}:]
Using 
$h =  \frac{n^{4/3}\cdot \log^{2/3} n}{(2k\cdot \bigtriangleup)^{1/3}}$ in Lemma~\ref{lemma:kSSP}
we obtain the bounds in
Theorem~\ref{thm:algkSSP}.

If edge weights are bounded by $\lambda$, the weight of any $h$-hop path is at most $h\lambda$.
Hence by Lemma~\ref{lemma:kSSP}, the $k$-SSP algorithm (Algorithm~\ref{algkSSP}) runs in 
$O(\frac{n^2\log n}{h} + h\sqrt{\lambda k})$ rounds.
Setting $h = n\log^{1/2} n/(\lambda^{1/4} k^{1/4})$ we obtain the bounds stated in Theorem~\ref{thm:algkSSPEdgeBound}.
\end{proof}

\vspace{-.05in}

\subsection{Computing Consistent h-hop trees}\label{sec:csssp}

In Section~\ref{sec:intro} we defined a natural notion of an $h$-hop SSSP tree rooted at a source $s$,
as a rooted tree which contains an $h$-hop shortest path from $s$ to every other vertex to
which there exists a path from $s$ with at most $h$ hops. We also defined tie-breaking rules for the case
when  multiple paths from $s$ to $v$ satisfy this definition: a path with the smallest number of hops
is chosen, with further ties broken by choosing the predecessor vertex with smallest ID.
Each $h$-hop tree constructed by the pipelined $(h,k)$-SSP algorithm~\cite{AR18}
 satisfies the definition of  an $h$-hop SSSP
tree (as constructed using the $Z.p$ pointers)
and these trees can also be constructed for each source using the Bellman-Ford algorithm~\cite{Bellman58}.

The definition of a CSSSP collection (Def.~\ref{def:CSSSP}) places additional
stringent conditions on the
structure of $h$-hop SSSP trees in the collection, and neither the pipelined algorithm in~\cite{AR18}  nor Bellman-Ford
is guaranteed to construct this collection.  At the same time, the trees in a CSSSP collection
may not satisfy the definition of $h$-hop SSSP  in Sec.~\ref{sec:intro} since we may not have
a path from vertex $u$ to vertex $v$ in a $h$-hop CSSSP tree even if there exists a path from
$u$ to $v$ with at most $h$ hops. See Fig.~\ref{fig:example}.

  \begin{figure*}
\scriptsize
	\centering
    	\begin{subfigure}{.2\textwidth}
    	\centering
            	\begin{tikzpicture}[scale=0.6, every node/.style={circle, draw, inner sep=0pt, minimum width=5pt, align=center}]
            	\node[circle,draw,inner sep=0pt] (a)[label=below:$a$] at (-4,2) {}; 
            	\node[circle,draw,inner sep=0pt] (b)[label=above:$b$] at (-2,2) {}; 
            	\node[circle,draw,inner sep=0pt] (d)[label=below:$d$] at (-3,0) {}; 
            	\node[circle,draw,inner sep=0pt] (c)[label=below:$c$] at (-1,0) {}; 
            	\begin{scope}[->,every node/.style={ right}]
            	\draw[->] (a) -- (b) node[midway,label=above:$1$] {};
            	\draw[->] (b) -- (d) node[midway,label=left:$1$] {};
            	\draw[->] (b) -- (c) node[midway,label=right:$8$] {};
            	\draw[->] (d) -- (c) node[midway,label=below:$1$] {};
            	\end{scope}
            	\end{tikzpicture}
            	\caption*{(i) Example graph $G$.}
            	\label{fig:G}
		\end{subfigure}    
		~
		\begin{subfigure}{.35\textwidth}
		\centering
			\begin{tikzpicture}[scale=0.6, every node/.style={circle, draw, inner sep=0pt, minimum width=5pt, align=center}]
            	\node[circle,draw,inner sep=0pt] (a)[label=below:$a$] at (-4,2) {}; 
            	\node[circle,draw,inner sep=0pt] (b)[label=above:$b$] at (-2,2) {}; 
            	\node[circle,draw,inner sep=0pt] (d)[label=below:$d$] at (-3,0) {}; 
            	\node[circle,draw,inner sep=0pt] (c)[label=below:$c$] at (-1,0) {}; 
            	\node[circle,draw,inner sep=0pt] (b1)[label=above:$b$] at (2,2) {}; 
            	\node[circle,draw,inner sep=0pt] (d1)[label=below:$d$] at (1,0) {}; 
            	\node[circle,draw,inner sep=0pt] (c1)[label=below:$c$] at (3,0) {}; 
            	\begin{scope}[->,every node/.style={ right}]
            	\draw[->] (a) -- (b)  {};
            	\draw[->] (b) -- (d) {};
            	\draw[->] (b) -- (c)  {};
            	\draw[->] (b1) -- (d1) {};
            	\draw[->] (d1) -- (c1)  {};
            	\end{scope}
            	\end{tikzpicture}
            	\caption*{(ii) $2$-hop SSSP for source nodes $a$ (on left) and $b$ (on right).}
            	\label{fig:2hopSSSP}
   	  \end{subfigure}
   	  ~
   	  \begin{subfigure}{.35\textwidth}
		\centering
			\begin{tikzpicture}[scale=0.6, every node/.style={circle, draw, inner sep=0pt, minimum width=5pt, align=center}]
            	\node[circle,draw,inner sep=0pt] (a)[label=below:$a$] at (-4,2) {}; 
            	\node[circle,draw,inner sep=0pt] (b)[label=above:$b$] at (-2,2) {}; 
            	\node[circle,draw,inner sep=0pt] (d)[label=below:$d$] at (-3,0) {}; 
            	\node[circle,draw,inner sep=0pt] (b1)[label=above:$b$] at (2,2) {}; 
            	\node[circle,draw,inner sep=0pt] (d1)[label=below:$d$] at (1,0) {}; 
            	\node[circle,draw,inner sep=0pt] (c1)[label=below:$c$] at (3,0) {}; 
            	\begin{scope}[->,every node/.style={ right}]
            	\draw[->] (a) -- (b)  {};
            	\draw[->] (b) -- (d) {};
            	\draw[->] (b1) -- (d1) {};
            	\draw[->] (d1) -- (c1)  {};
            	\end{scope}
            	\end{tikzpicture}
            	\caption*{(iii) $2$-hop CSSSP for source set $\{a,b\}$.}
            	\label{fig:2hopSSSP}
   	  \end{subfigure}
    \caption{This figure illustrates an example graph $G$ where the collection of $2$-hop SSSPs is different from the 
$2$-hop CSSSP generated for the source set $S = \{a, b\}$.Observe that the edge $(b,c)$ is part of the $2$-hop SSSP rooted at $a$ (Fig. (ii)) but is not part of the $2$-hop CSSSP (Fig. (iii)) since there is a shorter path from $b$ to $c$ of hop-length $2$ (path $\langle b, d, c\rangle$ in $2$-SSP rooted at $b$).}
    \label{fig:example}
\end{figure*}
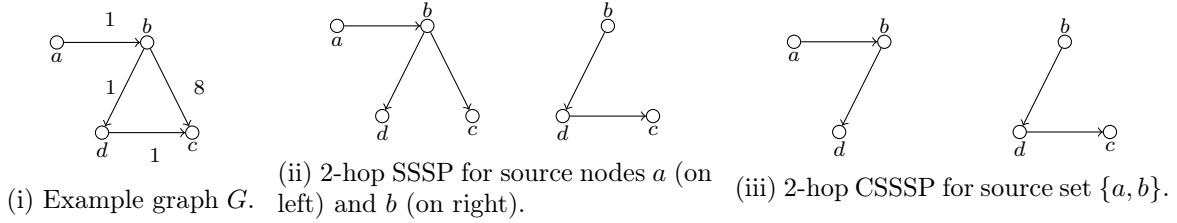

Our method to construct an $h$-hop CSSSP collection is very simple: We execute the pipelined algorithm in~\cite{AR18}
to construct $2h$-hop SSSP trees instead of $h$-hop SSSP trees. 
Our CSSSP collection will retain the initial $h$ hops of each of these $2h$-hop SSSP trees.
We now show that this simple construction results in an $h$-hop CSSSP collection.
Thus 
we are able to construct $h$-hop CSSSPs by incurring
just a constant factor overhead in the number of rounds over the bound for pipelined algorithm.

\vspace{-.05in}
\begin{lemma}	\label{lemma:CSSSP}
Let $\mathcal{A}$ be a distributed algorithm that computes $(h,k)$-SSP trees of shortest path distance at most $\Delta$ in an $n$-node graph in $f(h,k,n,\Delta)$ round.
Consider running Algorithm~$\mathcal{A}$ using the hop-length bound $2h$, and let
 $\mathcal{C}$ be the collection of $h$-hop trees formed by retaining the initial $h$ hops in each of these  $2h$-hop trees.
Then the collection $\mathcal{C}$ forms an $h$-hop CSSSP collection, and this collection can be computed in $f(2h,k,n,\Delta)$ rounds.
\end{lemma}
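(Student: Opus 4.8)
The plan is to prove two things: the round bound, which is immediate, and the structural claim that truncating each $2h$-hop tree to its first $h$ hops produces a collection meeting both requirements of Definition~\ref{def:CSSSP}, namely cross-tree consistency and completeness. The round bound is direct: by hypothesis, running $\mathcal{A}$ with hop-length parameter $2h$ costs $f(2h,k,n,\Delta)$ rounds, and deleting every vertex beyond hop $h$ is a purely local operation (each node knows its depth in the tree of each source), so truncation adds no communication.

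For the structural claim I would first fix notation. For a hop budget $b$ and vertices $u,v$, let $\mathrm{CP}_b(u,v)$ be the canonical $b$-hop shortest $u$-$v$ path: minimum weight, then fewest hops, then smallest-ID predecessor read back from $v$ --- exactly the tie-breaking of Definition~\ref{def:h-sssp}. By hypothesis the tree $\mathcal{A}$ roots at a source $x$ realizes the paths $\mathrm{CP}_{2h}(x,\cdot)$; being a tree, the root-to-$u$ prefix of its path to any descendant $v$ is itself $\mathrm{CP}_{2h}(x,u)$, which I denote $\pi_u$.

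The heart of the proof is a subpath-optimality claim: if in the $2h$-hop tree rooted at source $x$ the path to a retained vertex $v$ (at hop $j \le h$) passes through $u$ at hop $i$, then the subpath $\sigma$ from $u$ to $v$ equals $\mathrm{CP}_h(u,v)$. I would establish this by an exchange argument. Since $v$ is retained, $i \le j \le h$, so $\pi_u$ uses at most $h$ hops; hence for any competing $u$-$v$ path $\tau$ with at most $h$ hops, $\pi_u \cdot \tau$ is an $x$-$v$ walk of at most $2h$ hops which, after cycle removal (valid because all weights are non-negative, zero included), cannot beat $\mathrm{CP}_{2h}(x,v) = \pi_u \cdot \sigma$. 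Applying the three tie-breaking criteria in order to these two $x$-$v$ objects forces, successively, $\mathrm{wt}(\tau) \ge \mathrm{wt}(\sigma)$; then, at equal weight, at least as many hops; and finally, at equal weight and hops, that $\tau$ is not lexicographically smaller than $\sigma$. Hence $\sigma = \mathrm{CP}_h(u,v)$. Consistency is then immediate: the $u$-$v$ path in any truncated tree that contains it equals $\mathrm{CP}_h(u,v)$, independent of the root; and if $u$ is itself a source with $v$ retained in its own truncated tree, that path is $\mathrm{CP}_{2h}(u,v)$, which equals $\mathrm{CP}_h(u,v)$ since it already uses at most $h$ hops (a minimizer over the larger hop budget that lands in the smaller hop class is a minimizer there too). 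For completeness I would argue that if the unbounded-hop shortest $u$-$v$ path in $G$ uses at most $h$ hops, then $\mathrm{CP}_{2h}(u,v)$ attains that minimum weight with at most $h$ hops, because the min-hop tie-break cannot select more hops than an available optimum; thus $v$ survives truncation in $T_u$.

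The main obstacle I anticipate is pushing the exchange argument through the full tie-breaking order rather than the weights alone. The weight comparison is the clean case; the delicate points are (a) reducing the prepended walk $\pi_u \cdot \tau$ to a genuine simple path without corrupting the hop and lexicographic comparisons, which is exactly where non-negativity and the min-hop rule combine to exclude zero-weight cycles, and (b) verifying that the predecessor-ID comparison of $\pi_u \cdot \tau$ against $\pi_u \cdot \sigma$ is resolved within the differing suffixes $\tau$ versus $\sigma$, so that lexicographic optimality of $\mathrm{CP}_{2h}(x,v)$ genuinely descends to $\sigma$. A secondary subtlety worth stating explicitly, and illustrated by Figure~\ref{fig:example}, is that a retained $u$-$v$ path need not be the global shortest $u$-$v$ path and $v$ may legitimately be missing from $T_u$; this does not violate the definition, because the completeness clause only requires vertices whose global shortest path already fits within $h$ hops.
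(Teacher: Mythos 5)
Your proof is correct, and it runs on the same engine as the paper's: splice a competing subpath into a $2h$-hop tree and contradict the optimality-plus-tie-breaking guarantee of Definition~\ref{def:h-sssp}. The packaging, however, is genuinely different and in two respects more complete. The paper argues by contradiction directly on a pair of trees $T_x,T_y$ whose $u$--$v$ paths disagree, splitting into three cases (weight, hop count, predecessor ID); you instead prove the single statement that every retained $u$--$v$ subpath equals the root-independent canonical object $\mathrm{CP}_h(u,v)$, after which cross-tree consistency is automatic. Your version buys two things the paper's proof omits: (i) the completeness clause of Definition~\ref{def:CSSSP} --- that every $v$ whose shortest path from a source $u$ has at most $h$ hops survives truncation in $T_u$ --- which you verify via the min-hop tie-break and the paper does not address at all; and (ii) the reduction of the spliced walk $\pi_u\cdot\tau$ to a simple path without disturbing the hop and predecessor comparisons --- the paper writes ``we get a path'' where it actually gets a walk, and your explicit appeal to non-negativity plus the min-hop rule to excise zero-weight cycles is exactly the missing justification. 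The one place you are no more rigorous than the paper is the predecessor-ID tie-break: both arguments effectively compare last edges and must recurse when those coincide; if you write this up, define the tie-break as the reverse-lexicographic order you describe so that step (b) of your plan closes cleanly.
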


\vspace{-.2in}
\begin{proof}
If not, then there exist vertices $u, v$ and trees $T_x, T_y$ such that the paths from $u$ to $v$ in $T_x$ and
$T_y$ are different.
Let $\pi^x_{u,v}$ and $\pi^y_{u,v}$ be the corresponding paths in these trees.

There are three possible cases:
(1) when $wt(\pi^x_{u,v}) \neq wt(\pi^y_{u,v})$
(2) when paths $\pi^x_{u,v}$ and $\pi^y_{u,v}$ have same weight but different hop-lengths
(3) when both $\pi^x_{u,v}$ and $\pi^y_{u,v}$ have same weight and hop-length.

\textit{(1) $wt(\pi^x_{u,v}) \neq wt(\pi^y_{u,v})$:} 
w.l.o.g. assume that $wt(\pi^x_{u,v}) < wt(\pi^y_{u,v})$.
Now if we replace $\pi^y_{u,v}$ in $T_y$ with $\pi^x_{u,v}$, we get a path of smaller weight from $y$ to $v$ of
hop-length at most $2h$
and weight at most $\Delta$.
But this violates the definition of $h$-SSP (Definition~\ref{def:h-sssp}) since $T_y$ is a $2h$-SSP and hence it should contain a minimum
weight path from $y$ to $v$ of hop-length at most $2h$, and not the path $\pi^y_{y,v}$, resulting in a contradiction.

\textit{(2) paths $\pi^x_{u,v}$ and $\pi^y_{u,v}$ have same weight but different hop-lengths.}
w.l.o.g. assume that path $\pi^x_{u,v}$ has smaller hop-length than $\pi^y_{u,v}$.
Then $\kappa (\pi^x_{u,v}) < \kappa (\pi^y_{u,v})$ and hence 
$\kappa (\pi^y_{y,u} \circ \pi^x_{u,v}) < \kappa (\pi^y_{y,u} \circ  \pi^y_{u,v})$.
This again violates the $h$-SSP definition (Definition~\ref{def:h-sssp}) since $T_y$ is a $2h$-SSP and hence it should contain the path
$\pi^y_{y,u} \circ \pi^x_{u,v}$ from $y$ to $v$ as it has smaller number of hops than the path $\pi^y_{y,v}$.

\textit{(3) both $\pi^x_{u,v}$ and $\pi^y_{u,v}$ have same weight and hop-length.}
Let $(a,v)$ be the last edge on the path $\pi^x_{u,v}$ and let $(b,v)$ be the last edge on the path $\pi^y_{u,v}$.
w.l.o.g. assume that $ID(a) < ID(b)$.
Then again since $T_y$ is a $2h$-SSP, by $h$-SSP definition (Definition~\ref{def:h-sssp}) $T_y$ must contain the path 
$\pi^y_{y,u} \circ \pi^x_{u,v}$ from $y$ to $v$ since its predecessor vertex has smaller ID than the predecessor vertex of path $\pi^y_{y,v}$.
\end{proof}

\vspace{-0.1in}
\begin{lemma}	\label{lemma:hhopCSSSP}
An $h$-hop CSSSP collection can be computed in $O(\sqrt{\Delta hk})$ rounds using 
the pipelined APSP algorithm in~\cite{AR18}
 and in $O(nh)$ rounds using Bellman-Ford~\cite{Bellman58}.
\end{lemma}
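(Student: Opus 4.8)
The plan is to obtain this lemma as an immediate instantiation of Lemma~\ref{lemma:CSSSP}, which has already done the substantive work: it shows that the construction ``run a $2h$-SSP subroutine and keep the first $h$ hops of each tree'' yields a genuine $h$-hop CSSSP collection, and it charges this collection a running time of exactly $f(2h,k,n,\Delta)$ whenever the underlying $(h,k)$-SSP algorithm $\mathcal{A}$ runs in $f(h,k,n,\Delta)$ rounds. Hence all that remains is to plug in two concrete choices of $\mathcal{A}$, read off $f$, and simplify. I would therefore structure the proof as two independent applications of Lemma~\ref{lemma:CSSSP}, one per claimed bound.

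For the first bound I would take $\mathcal{A}$ to be the pipelined algorithm of~\cite{AR18}, whose $(h,k)$-SSP round complexity is $f(h,k,n,\Delta) = 2\sqrt{\Delta kh} + k + h$ by Theorem~\ref{thm:alg2}(i). Invoking Lemma~\ref{lemma:CSSSP} with hop bound $2h$ then gives a running time of
\[
f(2h,k,n,\Delta) = 2\sqrt{2\,\Delta k h} + k + 2h,
\]
whose leading term is $O(\sqrt{\Delta hk})$. For the second bound I would take $\mathcal{A}$ to be $h$-hop Bellman--Ford~\cite{Bellman58}: over the $2h = O(h)$ relaxation iterations, each node must relay its current distance labels for the (at most $k \le n$) sources across each incident edge, which by the $O(\log n)$-bandwidth constraint costs $O(k) = O(n)$ rounds per iteration, for a total of $O(nh)$ rounds; the replacement of $h$ by $2h$ only affects the constant. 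Substituting into Lemma~\ref{lemma:CSSSP} yields the stated $O(nh)$ bound.

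The only point requiring care, and the nearest thing to an obstacle, is justifying that in the first bound the additive $k + 2h$ terms and the constant-factor blow-up from replacing $h$ by $2h$ are absorbed into $O(\sqrt{\Delta hk})$. This amounts to checking $k \le \Delta h$ and $h \le \Delta k$, i.e.\ that the square-root term dominates. Both hold in the parameter regime in which the lemma is applied (where, for instance, $h$ is chosen so as to balance against the blocker-set cost in Lemma~\ref{lemma:kSSP}, and $\Delta \ge 1$, $k \le n$), so I would simply verify these inequalities under the operating assumptions and conclude. Everything else in the argument is a direct substitution into Lemma~\ref{lemma:CSSSP}, so no further structural work is needed.
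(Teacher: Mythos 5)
Your proposal is correct and matches the paper's intended derivation: the paper gives no separate proof for this lemma precisely because it is the immediate instantiation of Lemma~\ref{lemma:CSSSP} with $\mathcal{A}$ taken to be the pipelined $(h,k)$-SSP algorithm of Theorem~\ref{thm:alg2}(i) (yielding $2\sqrt{2\Delta kh}+k+2h = O(\sqrt{\Delta hk})$) and with $2h$-round, $k$-source Bellman--Ford (yielding $O(kh)=O(nh)$). Your added check that the additive $k+2h$ terms are dominated by the square-root term in the relevant parameter regime is a reasonable piece of diligence that the paper leaves implicit.
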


\vspace{-.05in}
We now show two useful properties of an $h$-hop CSSSP collection that we will use in our
blocker set algorithm in the next section.
(Lemma~\ref{lemma:cInTree} is also implicitly established in~\cite{ARKP18}).

\vspace{-.05in}

\begin{lemma}\label{lemma:cOutTree}
Let $c$ be a vertex in $G$ and
let $T$ be the union of the edges in the collection of subtrees rooted at $c$ in the trees in 
a $h$-hop CSSSP collection $\mathcal{C}$.
Then $T$ forms an out-tree rooted at $c$.
\end{lemma}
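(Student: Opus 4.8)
The plan is to use the consistency property of the CSSSP collection (Definition~\ref{def:CSSSP}) to show that in the union $T$ every vertex other than $c$ has a unique incoming edge, and that following these incoming edges backward always terminates at $c$; together these give an out-tree rooted at $c$.

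First I would record the key consequence of the CSSSP definition. For any vertex $v$ that lies in the subtree rooted at $c$ in some tree $T_x \in \mathcal{C}$, the tree-path from $c$ to $v$ exists in $T_x$, so by Definition~\ref{def:CSSSP} this path is identical in every tree of $\mathcal{C}$ that contains a $c$-to-$v$ path. I would call this common path the \emph{canonical path} $P(c,v)$. I would also note a prefix property: if $a$ lies on $P(c,v)$, then the portion of $P(c,v)$ from $c$ to $a$ is exactly $P(c,a)$, since in the tree $T_x$ realizing $P(c,v)$ the path from $c$ to $a$ is the unique tree-path, which is precisely this prefix.

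Next I would argue uniqueness of parents, which is the crux. Every edge of $T$ arises as an edge of some subtree rooted at $c$ in some $T_x$, hence as the last edge of the path from $c$ to its head in $T_x$, i.e. as the last edge of a canonical path. Suppose for contradiction a vertex $v \neq c$ had two incoming edges $(a,v)$ and $(b,v)$ in $T$, coming from subtrees rooted at $c$ in $T_x$ and $T_y$ respectively. Then $v$ is a descendant of $c$ in both $T_x$ and $T_y$, so the $c$-to-$v$ path exists in both; by the canonical-path property these two paths coincide, so their last edges agree and $a = b$. Thus each $v \neq c$ has in-degree exactly one in $T$, while $c$ has in-degree zero, being the root of every subtree that contributes edges to $T$.

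Finally I would rule out cycles and establish reachability. Starting from any vertex $v \in T$ and repeatedly following the unique incoming edge, the prefix property shows that this traces $P(c,v)$ in reverse; since $P(c,v)$ is a simple path in a tree, it ends at $c$ after finitely many steps and revisits no vertex, so $T$ contains no cycle and $c$ reaches $v$ along $P(c,v)$. Hence every vertex of $T$ is reachable from $c$ by a unique directed path, and $T$ is an out-tree rooted at $c$. The main obstacle is the parent-uniqueness step: a priori the union of subtrees drawn from different source trees could endow a vertex with two distinct predecessors, and it is precisely the CSSSP consistency property that forbids this.
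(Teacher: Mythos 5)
Your proof is correct and rests on the same key fact as the paper's: the CSSSP consistency property forces the $c$-to-$v$ path to be identical across all trees in the collection, which rules out a vertex acquiring two distinct parents in the union. The paper phrases this more tersely as a divergence-then-reconvergence contradiction, while you spell out in-degree uniqueness, acyclicity, and reachability explicitly, but the argument is essentially the same.
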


\vspace{-.15in}
\begin{proof}
If not, there exist nodes $u$ and $v$ and trees $T_x$ and $T_y$ such that the path from $c$ to 
$u$ in $T_x$ and path from $c$ to $v$ in $T_y$ first diverge from 
each other after starting from $c$ and then coincide again at some vertex $z$. 
But since  $\mathcal{C}$ is an $h$-hop CSSSP collection,
 by Lemma~\ref{lemma:CSSSP} the path from $c$ to $z$ in the collection $\mathcal{C}$ is unique.
\end{proof}

\vspace{-.1in}

\begin{lemma}	\label{lemma:cInTree}
Let $c$ be a vertex in $G$ and let
$T$ be the collection of paths from source node $x \in S$ to  $c$ in the
trees in a $h$-hop CSSSP collection $\mathcal{C}$.
Then $T$ forms an in-tree rooted at $c$.
\end{lemma}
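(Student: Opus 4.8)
The plan is to dualize the argument just used for Lemma~\ref{lemma:cOutTree}. There, the out-tree property was established by ruling out two distinct directed paths from $c$ to a common vertex; here, since the edges of each path $\pi^x_{x,c}$ point \emph{toward} $c$, I instead want to show that no vertex can have two distinct successors on its way to $c$. Concretely, I would suppose for contradiction that $T$ is not an in-tree rooted at $c$. Every path contributing to $T$ terminates at $c$, and $c$ is a sink, so the only possible failure is that some vertex $z \neq c$ has out-degree at least $2$ in $T$; that is, $z$ lies on two source-to-$c$ paths $\pi^x_{x,c}$ (the path from $x$ to $c$ in $T_x$) and $\pi^y_{y,c}$ (the path from $y$ to $c$ in $T_y$) whose successors of $z$ differ.

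The key step is to invoke consistency of the CSSSP collection. In the tree $T_x$ rooted at $x$, the sub-path of $\pi^x_{x,c}$ running from $z$ to $c$ is precisely the unique tree-path from $z$ to $c$ in $T_x$; likewise the sub-path of $\pi^y_{y,c}$ from $z$ to $c$ is the tree-path from $z$ to $c$ in $T_y$. By Definition~\ref{def:CSSSP} (equivalently, by Lemma~\ref{lemma:CSSSP}), the path from $z$ to $c$ is identical in every tree of $\mathcal{C}$ in which it exists, so these two sub-paths coincide edge-for-edge. In particular the successor of $z$ is the same in both, contradicting the assumption that $z$ has two distinct successors. Hence every vertex of $T$ has out-degree at most $1$.

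Finally I would close by noting that out-degree at most $1$ at every vertex, together with the fact that each vertex of $T$ lies on some path $\pi^x_{x,c}$ and therefore has a directed path to the sink $c$, forces $T$ to be acyclic and to converge at $c$; this is exactly an in-tree rooted at $c$. I expect the only real subtlety---rather than a genuine obstacle---to be the bookkeeping in the second step: verifying that a sub-path of a source-to-$c$ path is itself the canonical $z$-to-$c$ path of the collection, so that the CSSSP consistency of Lemma~\ref{lemma:CSSSP} applies with $u = z$ and $v = c$. Once that is pinned down the contradiction is immediate, and the proof is a near mirror image of the out-tree argument in Lemma~\ref{lemma:cOutTree}.
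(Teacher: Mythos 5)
Your proof is correct, and it is exactly the intended argument: the paper omits an explicit proof of Lemma~\ref{lemma:cInTree} (deferring to~\cite{ARKP18}), but its proof of the companion Lemma~\ref{lemma:cOutTree} uses precisely this appeal to CSSSP path-uniqueness, of which your argument is the straightforward dual. No gaps; the key step of identifying the sub-path from $z$ to $c$ with the canonical tree path so that Definition~\ref{def:CSSSP} applies is handled correctly.
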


\vspace{-.1in}
\subsection{Computing a Blocker Set}\label{sec:blocker}\label{sec:centers}

Our overall blocker set algorithm runs in 
$O(\frac{n^2\log n}{h} + \sqrt{\bigtriangleup hk})$
 rounds. It differs from the 
blocker set algorithm in~\cite{ARKP18} by developing faster algorithms 
for two  steps that take $O(nh)$ rounds
in~\cite{ARKP18}. 

The first step in~\cite{ARKP18} that takes $O(nh)$ rounds is the step that computes the initial
`scores'  at all nodes for all $h$-hop trees in the CSSSP collection. 
The score of node $v$ in an $h$-hop tree is the number of
$v$'s descendants in that tree. Instead of the $O(nh)$ rounds, we can compute scores for all trees at all nodes
in $O(\sqrt{\bigtriangleup hk})$ rounds with  a timestamp technique 
given in~\cite{PR18} for
propagating values from descendants to ancestors in
the shortest path trees within the same bound as the APSP algorithm.

To explain the second $O(nh)$-round step in~\cite{ARKP18}, we first give 
 a brief recap of the blocker set algorithm in~\cite{ARKP18}.
This algorithm picks nodes to
be added to the blocker set greedily. The next node that is added to the blocker set is one that lies in the maximum number
of  paths in the $h$-hop trees that have not yet been covered by the already selected blocker nodes. To identify such a node,
the algorithm maintains at each node $v$ a count (or {\it score}) of the number of descendant leaves in each tree, since the sum of these
counts is precisely the number of root-to-leaf paths in which $v$ lies. Once all vertices have their overall score, the new blocker
node $c$ can be identified as one with the maximum score.  It now remains for each node $v$ to update its scores to reflect the
fact that paths through $c$ no longer exist in any of the trees. This update computation  is divided into two steps in~\cite{ARKP18}.
In both steps, the main challenge is for a given node to determine, in each tree $T_x$, whether it is an ancestor of $c$, a descendant of
$c$, or unrelated to $c$.

\vspace{.03in}

\noindent
1. {\it Updates at Ancestors.} For each $v$, in each tree $T_x$ where $v$ is an ancestor of $c$, $v$ needs to reduce its score for
$T_x$ by $c$'s score for $T_x$ since all of those descendant leaves have been eliminated. In~\cite{ARKP18} an $O(n)$-round
pipelined algorithm (using the
in-tree property in Lemma~\ref{lemma:cInTree})  is given for this update at all nodes in all trees, and this suffices for our purposes. 

\vspace{.03in}

\noindent
\noindent
2. {\it Updates at Descendants.} For each $v$, in each tree $T_x$ where $v$ is a descendant of $c$, $v$ needs to reduce its
score for $T_x$ to zero, since all descendant leaves are eliminated once $c$ is removed.
In~\cite{ARKP18} this computation is performed by an $O(nh)$-round precomputation in which 
each vertex identifies all of its ancestors in all of the $h$-hop trees and thereafter can readily identify
the trees in which it is a descendant of a newly chosen blocker node $c$ once $c$ broadcasts its
identity to all nodes. But this is too expensive for our purposes. 

\vspace{-.05in}

\begin{algorithm}[H]
\scriptsize
\caption{\scriptsize Pipelined Algorithm for updating scores at $v$ in trees $T_x$ in which $v$ is a descendant of newly chosen blocker node $c$}
Input: $Q$: blocker set, $c$: newly chosen blocker node, $S$: set of sources
\begin{algorithmic}[1]
\Statex \textbf{(only for $c$)}
\State \textbf{Local Step at $c$:} create $list_c$ to store the ID of each source $x \in S$ such that $score_x (c) \neq 0$; \textbf{for each} $x \in S$ \textbf{do} set $score_x (c) \la 0$; set $score(c) \la 0$	\label{algDes:send1}
\State \textbf{Send:} \textbf{Round $i$:} let $\langle x \rangle$ be the $i$-th entry in $list_c$; send $\langle x \rangle$ to $c$'s children in $T_x$.	\label{algDes:send2} \vspace{.05in}
\Statex \textbf{(round $r > 0:$ for vertices $v \in V - Q - \{c\}$)}
\State \textbf{send[lines~\ref{algDes:send3}-\ref{algDes:send4}]:}
\If{$v$ received a message $\langle x \rangle$ in round $r-1$}	\label{algDes:send3}
	\State \textbf{if} $v \neq x$ \textbf{then} send $\langle x \rangle$ to $v$'s children in $T_x$		\label{algDes:send4}
\EndIf
\State \textbf{receive[lines~\ref{algDes:receive1}-\ref{algDes:receive2}]:}
\If{$v$ receives a message $\langle x \rangle$}	\label{algDes:receive1}
	\State $score(v) \la score(v) - score_x (v)$; $score_x (v) \la 0$	\label{algDes:receive2}
\EndIf
\end{algorithmic}  \label{algDes}
\end{algorithm}

\vspace{-.15in}

Here, we perform no precomputation but instead  in Algorithm~\ref{algDes} we use the property in Lemma~\ref{lemma:cOutTree} to
develop a method similar to the one for updates at ancestors. 
Initially $c$ creates a list, $list_c$, where it adds the IDs of all the source nodes $x$ such that $c$ lies in tree $T_x$.
In round $i$, $c$ sends the $i$-th entry $\langle x \rangle$ in $list_c$ to all its children in $T_x$.
Since $T$ (in Lemma~\ref{lemma:cOutTree}) is a tree, every node $v$ receives at most one message in a given round $r$. 
If $v$ receives the message for source $x$ in round $r$, it forwards this message to all its children in $T_x$ in
the next round, $r+1$, and also set its score for source $x$ to $0$.
Similar to the algorithm for updating ancestors of $c$~\cite{ARKP18}, it is readily seen
 that every descendant of $c$ in every tree $T_x$ receives a message for 
$x$ by round $k+h-1$.

\begin{lemma}	\label{lemma:algDes}
Algorithm~\ref{algDes} correctly updates the scores of all nodes $v$ in every tree $T_x$ in which $v$ is a descendant
of $c$ in $k + h -1$ rounds.
\end{lemma}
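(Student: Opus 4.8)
The plan is to establish two things: (i) that the algorithm performs exactly the intended updates — for every tree $T_x$ and every node $v$ that is a descendant of $c$ in $T_x$ it sets $score_x(v) \la 0$ and decrements $score(v)$ by the old value of $score_x(v)$, while leaving every other score untouched — and (ii) that all of these updates complete within $k+h-1$ rounds. The key structural tool is Lemma~\ref{lemma:cOutTree}: the union $T$ of the subtrees rooted at $c$ across the trees of the CSSSP collection is an out-tree rooted at $c$, so every node $v \neq c$ in $T$ has a \emph{unique} parent $p(v)$. Moreover, by the consistency of the CSSSP (Lemma~\ref{lemma:CSSSP}) the $c$-to-$v$ path is identical in $T$ and in each $T_x$ that contains it, so for every source $x$ with $v$ a descendant of $c$ in $T_x$ the edge $(p(v),v)$ lies in $T_x$.

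First I would prove correctness by induction on the hop-distance $d$ of a node from $c$ in $T$. The list $list_c$ contains precisely the sources $x$ with $score_x(c) \neq 0$, i.e. exactly the trees in which $c$ has descendants. When a node forwards $\langle x \rangle$ only to its children in $T_x$, the message travels strictly along $T_x$-edges, so by the induction it reaches a node $v$ if and only if the $c$-to-$v$ path in $T$ uses only $T_x$-edges, which by the remark above is exactly the condition that $v$ is a descendant of $c$ in $T_x$. Each such $v$ receives $\langle x \rangle$ once, from $p(v)$, and then executes the receive step once, setting $score_x(v) \la 0$ and $score(v) \la score(v) - score_x(v)$; no non-descendant is ever reached, so no spurious update occurs.

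Next I would argue that the algorithm is well-formed in the \congest{} model, namely that no edge carries two messages in a single round, and in fact each node receives at most one message per round. This is where the out-tree property is essential: since every node $v$ has a single parent $p(v)$ in $T$, all messages addressed to $v$ arrive along the one edge $(p(v),v)$. Because $c$ emits the entries of $list_c$ one per round (the $i$-th entry in round $i$), an easy induction on hop-distance shows that each node forwards at most one message per round, hence receives at most one message per round; thus there is never a collision and the forwarding rule is unambiguous.

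Finally I would bound the round count by a pipelining argument: there are $|list_c| \le |S| = k$ entries, so the last message leaves $c$ no later than round $k$; once a message for source $x$ leaves $c$ it advances one hop per round, and the deepest descendant of $c$ in $T_x$ is at hop-distance at most $h$ from $c$ (a root-to-leaf path in an $h$-hop tree has at most $h$ hops, with $h$ attainable only when $c$ is itself the root $x$), so the $i$-th message reaches its deepest target by round $i + h - 1 \le k + h - 1$. I expect the main obstacle to be the no-collision / at-most-one-message-per-round step: it is exactly here that Lemma~\ref{lemma:cOutTree} is indispensable, since without the tree structure a node could be a child of $c$ (or of an internal node) in two distinct trees and receive conflicting messages in the same round; the precise off-by-one in the final count, together with the boundary case $c = x$, also needs care.
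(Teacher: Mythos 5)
Your argument matches the paper's own justification (given in the paragraph preceding the lemma, which is stated without a separate proof environment): the out-tree property of Lemma~\ref{lemma:cOutTree} guarantees that each node has a unique parent in the union of the subtrees rooted at $c$ and hence receives at most one message per round, and the pipelined emission of the at most $k$ entries of $list_c$, each travelling at most $h$ hops down the tree, yields the $k+h-1$ bound. Your write-up is essentially a more detailed rendering of the same reasoning, with the correctness induction and the boundary cases made explicit.
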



\vspace{-.05in}

\section{Simplified Versions of Short-Range Algorithms in~\cite{HNS17} }\label{sec:sssp}

We describe here simplified versions of the {\it short-range}  and {\it short-range-extension} algorithms
 used in the randomized
$\tilde{O}(n^{5/4})$ round APSP algorithm in Huang et al.~\cite{HNS17}.
Our short-range Algorithm~\ref{algSSSP}  is 
inspired by
the pipelined APSP algorithm in~\cite{AR18}
and is much simpler 
than it since it is for a single source.

Given a hop-length $h$ and a source vertex $x$, the
 short-range algorithm in~\cite{HNS17} computes the $h$-hop shortest path distances from source $x$
 in a graph $G'$ (obtained through `scaling') where $\Delta \leq n-1$. The scaled graph has different edge weights for different sources, and hence $h$-hop APSP is computed
 through $n$ $h$-hop SSSP (or {\it short-range}) computations, each of which
runs with {\it dilation} (i.e., number of rounds)  $\tilde{O}(n\sqrt{h})$ and {\it congestion}
(i.e., maximum number of messages along an edge) $O(\sqrt h)$. By running this algorithm using each vertex as source, $h$-hop APSP is  computed in $G'$ in $O(n \sqrt h)$ rounds w.h.p. in $n$ using a result
in Ghaffari's 
framework~\cite{Ghaffari15}, which gives a randomized method to execute  this collection of different short-range executions simultaneously in 
$\tilde{O}(\text{dilation} + n \cdot \text{congestion}) = \tilde{O}(n \sqrt h)$ rounds.

The short-range algorithm in~\cite{HNS17} for a given source runs in two stages:.
Initially every zero edge-weight is increased to a positive value $\alpha = 1/\sqrt{h}$ and then $h$-hop SSSP is computed
using a BFS variant in $\tilde{O}(n/\alpha) = \tilde{O}(n\sqrt{h})$ rounds.
This gives an approximation to the $h$-hop SSSP
where the additive error is at most $h\alpha = \sqrt{h}$.
This error is then fixed by running Bellman-Ford algorithm~\cite{Bellman58} for $h$ rounds.
The total round complexity of this SSSP algorithm is $\tilde{O}(n\sqrt{h})$ and the congestion is $O(\sqrt{h})$.

\vspace{-.05in}

\begin{algorithm}[H]
\scriptsize
\caption{\scriptsize Round $r$ of short-range algorithm for source $x$ \\ (initially $d^* \la 0$; $l^* \la 0$ at source $x$)}
\begin{algorithmic}[1]
\Statex \textbf{(at each node $v \in V$)}
\State \textbf{send:} \textbf{if} $\lceil d^*\cdot \sqrt{h} + l^* \rceil = r$ \textbf{then} send $(d^*, l^*)$ to all the neighbors  \vspace{0.05in}	\label{algSSSP:send}
\State \textbf{receive [Steps~\ref{algSSSP:receiveStart}-\ref{algSSSP:receiveEnd}]:}  let $I$ be the set of incoming messages	\label{algSSSP:receiveStart}
\For{{\bf each} $M \in I$}
\State let $M = (d^{-}, l^{-})$ and let the sender be $y$.
\State $d \la d^{-} + w (y,v)$; $l \la l^{-} + 1$
\State {\bf if} $d < d^*$ or $(d = d^* \text{ and } l < l^*)$ \textbf{then} set $d^* \la d$; $l^* \la l$ \label{algSSSP:receiveEnd}
\EndFor
\end{algorithmic} \label{algSSSP}
\end{algorithm}

\vspace{-.1in}

We now describe our simplified short-range algorithm (Algorithm~\ref{algSSSP}) which has the same dilation $O(n\sqrt{h})$
and congestion $O(\sqrt h)$.
Here $d^*$ is the current best estimate for the shortest path distance from $x$ at node $v$ and $l^*$
is the hop-length of the corresponding path. 
Source node $x$ initializes $d^*$ and $l^*$ values to zero and sends these values to its
neighbors in round $0$ (Step~\ref{algSSSP:send}).
At the start of a round $r$, each node $v$ checks if its current  
$d^*$ and $l^*$ values satisfy 
$\lceil d^*\cdot \sqrt{h} + l^* \rceil = r$, and if so,  it
sends this estimate to each of its neighbors. 
To bound the number of such messages $v$ sends throughout the entire execution, we note that $v$ will  send another
message in a future round only if it receives a smaller $d^*$ value with higher $\lceil d^*\cdot \sqrt{h} + l^* \rceil$ value.
But since $l^* \leq h$ and $d^*$ values are non-negative integers,
$v$ can send at most $\sqrt{h}$ messages to its neighbors throughout the entire execution.

We now establish that vertex $v$ will receive the message that creates the pair $d^*, l^*$ at $v$ before round
$\lceil d^*\cdot \sqrt{h} + l^* \rceil$, and hence will be able to perform the send in 
Step~\ref{algSSSP:send} of  Algorithm~\ref{algSSSP}.

\begin{lemma}	\label{lemma:sendBound}
Let $\pi^*_{x,v}$ be a path from source $x$ to vertex $v$ with the minimum number of hops among all
$h$-hop shortest paths from $x$ to $v$. Let $\pi^*_{x,v}$ have $l^*$ hops and weight (distance) $d^*$.
If $v$ receives the message for the pair $d^*$, $l^*$ 
in round $r$ then  $r <  \lceil d^*\cdot \sqrt{h} + l^* \rceil$.
\end{lemma}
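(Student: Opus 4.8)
The plan is to prove the bound $r < \lceil d^* \cdot \sqrt{h} + l^* \rceil$ by induction on the hop-length $l^*$ of the path $\pi^*_{x,v}$, tracking the round at which each prefix of the path delivers its defining message. The base case is the source itself, where $d^* = 0$ and $l^* = 0$; the source sends its initial pair in round $0$, and $0 < \lceil 0 \rceil$ fails trivially, so I would phrase the induction to assert delivery \emph{by} round $\lceil d^* \sqrt{h} + l^* \rceil$ (i.e.\ the sender sends at that round and the neighbor receives one round later), and then read off the strict inequality for the recipient's own send. Concretely, let $\pi^*_{x,v} = \langle x = v_0, v_1, \dots, v_{l^*} = v \rangle$, and let $d_i, l_i = i$ be the weight and hop-count of the prefix ending at $v_i$. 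I want to show that $v_i$ holds the pair $(d_i, i)$ — or a pair at least as good in the $(d, l)$ lexicographic order — in time to send it at round $\lceil d_i \sqrt{h} + i \rceil$.

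The key step is the inductive one. Suppose $v_{i-1}$ has established its pair and sends it at round $r_{i-1} = \lceil d_{i-1}\sqrt{h} + (i-1)\rceil$ (this is the send condition in Step~\ref{algSSSP:send}). Then $v_i$ receives it at round $r_{i-1}$, computes $d = d_{i-1} + w(v_{i-1}, v_i) = d_i$ and $l = i$, and in Step~\ref{algSSSP:receiveEnd} updates to $(d_i, i)$ unless it already holds a lexicographically-smaller-or-equal pair. The arithmetic I must verify is that $v_i$ can still legally send at round $\lceil d_i \sqrt{h} + i\rceil$, i.e.\ that it has received this pair strictly before that round: I need $r_{i-1} < \lceil d_i \sqrt{h} + i \rceil$, which is exactly the claimed inequality for $v_i$. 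Since $d_i \geq d_{i-1}$ (edge weights are non-negative) and the hop count increases by one, we have $d_{i-1}\sqrt{h} + (i-1) < d_i\sqrt{h} + i$, and the monotonicity of $\lceil \cdot \rceil$ gives $r_{i-1} = \lceil d_{i-1}\sqrt{h}+(i-1)\rceil \le \lceil d_i\sqrt{h}+i\rceil$; the strict gap of at least $1$ in the hop term forces $r_{i-1} < \lceil d_i \sqrt h + i\rceil$, closing the induction.

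The main obstacle, and the place I would spend the most care, is the interaction with the tie-breaking and the ``or a lexicographically better pair'' clause: I must ensure that if $v_i$ has \emph{already} settled on a different pair by round $r_{i-1}$, that pair is no worse than $(d_i, i)$ in the $(d, l)$ order, so that $v_i$ still sends something that correctly represents the true $h$-hop shortest distance at or before the target round. Here I would invoke the hypothesis that $\pi^*_{x,v}$ is chosen to have the \emph{minimum} number of hops among all $h$-hop shortest paths, which pins down $(d^*, l^*)$ as lexicographically minimal and rules out the possibility that $v$ has committed to a strictly-larger-weight or more-hops pair that it would never revise. A secondary subtlety is the ceiling arithmetic when $\sqrt{h}$ is irrational: I would keep the analysis in terms of the real-valued quantity $d\sqrt{h} + l$ and only apply $\lceil \cdot \rceil$ at the end, using that a strict increase of the real argument by at least $1$ (coming from the hop term) guarantees a strict increase of the ceiling, which is what the round-scheduling in Step~\ref{algSSSP:send} requires.
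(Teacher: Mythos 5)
Your proof is correct and follows essentially the same route as the paper's: an induction along the path $\pi^*_{x,v}$ (the paper phrases it as induction on the round number, but the inductive step is identical, passing from the predecessor $y=v_{i-1}$ to $v=v_i$), closed by the same ceiling inequality $\lceil d_{i-1}\sqrt{h}+(i-1)\rceil < \lceil d_i\sqrt{h}+i\rceil$ driven by the unit increase in the hop term and $w(v_{i-1},v_i)\ge 0$. Your added care about the lexicographic tie-breaking and the minimality of prefixes is a reasonable refinement that the paper's proof leaves implicit, but it does not change the argument's structure.
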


\begin{proof}
We show this by induction on round $r$. The base case is trivially satisfied since $x$ already knows $d^*$ and $l^*$ values at the start (Round $0$).

Assume inductively that the lemma holds at all vertices up to round $r-1$. 
Let $y$ be the predecessor of $v$ on the path $\pi^*_{x,v}$.
Then $y$ must have received the message for its pair $(d^* - w(y,v)$, $l^* - 1)$ in a  round
$r' < r$.
Let $k=\lceil (d^* - w(y,v))\cdot  \sqrt{h} + l^* - 1 \rceil$. Then,
 $r' < k$ by the inductive assumption. 
 So $y$ will send the message 
 $(d^* - w(y,v)$, $l^* - 1)$ to $v$ in
 round $r=k$ in Step~\ref{algSSSP:send} of
 Algorithm~\ref{algSSSP}. But
 $k=  \lceil (d^* - w(y,v))\cdot  \sqrt{h} + l^* -1 \rceil < \lceil (d^* - w(y,v))\cdot  \sqrt{h} + l^* \rceil\leq  \lceil d^*\cdot \sqrt{h} + l^* \rceil$, since $w(y,v) \geq 0$. Hence the round $r$ in which $v$ receives
 the message for the pair $d^*$, $l^*$   is less than $\lceil d^*\cdot \sqrt{h} + l^* \rceil$.
 
This establishes the induction step and the lemma.
\end{proof}

If shortest path distances are bounded by $\Delta$, Algorithm~\ref{algSSSP}  runs in $\lceil \Delta\cdot \sqrt{h} + h \rceil$ rounds with congestion at most $\sqrt h$.
And if $\Delta \leq n-1$ (as in~\cite{HNS17}),
then we can compute shortest path distances from $x$ to every node $v$ in $O(n\sqrt{h})$ rounds.

We can similarly simplify the short-range-extension algorithm in~\cite{HNS17},
 where some nodes already know their distance from source $x$ and the goal is to 
compute shortest paths from $x$ by extending these already computed shortest paths to $u$ by another $h$ hops.
To implement this, we only need to modify the initialization in Algorithm~\ref{algSSSP} so that  each such node $u$ 
initializes $d^*$ with  this already computed distance.
The round complexity is again $O(\Delta \sqrt{h})$ and the congestion per source is $O(\sqrt{h})$.
This gives us the following result.

\begin{lemma}
Let $G = (V,E)$ be a directed or undirected graph, where all edge weights are non-negative distances 
(and zero-weight edges are allowed),
and where shortest path distances are bounded by $\Delta$.
Then by using Algorithm~\ref{algSSSP}, we can compute $h$-hop SSSP and $h$-hop extension in $O(\Delta \sqrt{h})$ rounds with congestion bounded by $\sqrt{h}$.
 \end{lemma}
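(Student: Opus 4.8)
The plan is to read the lemma as a packaging of four claims about Algorithm~\ref{algSSSP}: soundness and completeness of the computed $h$-hop distances, the $\lceil \Delta\sqrt h + h\rceil = O(\Delta\sqrt h)$ round bound, and the $\sqrt h$ congestion bound, followed by the observation that the extension variant reduces to the same analysis after changing only the initialization. Most of the work for the single-source case is already available (Lemma~\ref{lemma:sendBound} and the discussion preceding the lemma), so I would first assemble those facts cleanly and then spend the real effort on transferring them to the extension. I expect the congestion estimate and the extension transfer to be the two places that need care.

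For correctness I would separate soundness from completeness. For soundness I would show by induction over the receive steps that whenever a node $v$ holds a finite pair $(d^*,l^*)$ there is an actual path from $x$ to $v$ of weight exactly $d^*$ using exactly $l^*$ hops: the base case is the source holding $(0,0)$, and an update setting $d^*\la d^- + w(y,v)$, $l^*\la l^-+1$ just appends the edge $(y,v)$ to the path witnessing the sender's pair. Since we retain only pairs with $l^*\le h$, this gives $d^*\ge \delta_h(x,v)$ at all times, so the estimate never underestimates. Completeness is exactly Lemma~\ref{lemma:sendBound}: taking $\pi^*_{x,v}$ to be the minimum-hop path among all $h$-hop shortest paths from $x$ to $v$, with weight $\delta_h(x,v)$ and $l^*$ hops, that lemma guarantees $v$ receives the message creating this pair before round $\lceil \delta_h(x,v)\sqrt h + l^*\rceil$. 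Because the update rule moves the estimate only to lexicographically smaller pairs and soundness forbids any pair lex-smaller than $(\delta_h(x,v),l^*)$, once $v$ receives this message its estimate settles at the correct value, consistent with the weight-then-hop tie-breaking of Definition~\ref{def:h-sssp}.

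For the round bound I would note that a send at round $r$ requires $\lceil d^*\sqrt h + l^*\rceil = r$, and since all relevant distances satisfy $d^*\le\Delta$ and we keep only $l^*\le h$, no send occurs after round $\lceil\Delta\sqrt h + h\rceil$; combined with completeness, every $v$ has stabilized by then, giving $O(\Delta\sqrt h)$ rounds. The congestion bound is the one genuinely quantitative step. I would compare two successive sends by a node $v$ at rounds $r_i<r_{i+1}$ with held pairs $(d_i,l_i)$ and $(d_{i+1},l_{i+1})$. Since the estimate only decreases lexicographically, a pure hop-count decrease ($d_{i+1}=d_i$, $l_{i+1}<l_i$) would force $\lceil d_{i+1}\sqrt h + l_{i+1}\rceil\le r_i$, contradicting $r_{i+1}>r_i$; hence $d_{i+1}\le d_i-1$ by integrality. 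Removing ceilings from $r_{i+1}>r_i$ yields $d_{i+1}\sqrt h + l_{i+1}>d_i\sqrt h + l_i$, and feeding in $d_{i+1}\le d_i-1$ gives $l_{i+1}-l_i>\sqrt h$. Thus each resend increases $l^*$ by more than $\sqrt h$, and since $0\le l^*\le h$ there are at most $\sqrt h$ sends per node.

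Finally, for $h$-hop extension I would keep Algorithm~\ref{algSSSP} verbatim except that each node $u$ already knowing a distance $\hat\delta(u)$ from $x$ initializes $d^*\la\hat\delta(u)$, $l^*\la 0$, so that $u$ first sends at round $\lceil\hat\delta(u)\sqrt h\rceil$. I would then re-run the three arguments treating the pre-initialized nodes as virtual sources: soundness now witnesses a path of weight $\hat\delta(u)$ plus the weight of the appended $\le h$ hops, and the base case of the Lemma~\ref{lemma:sendBound} induction becomes ``each $u$ knows its pair at round $0$,'' which is all the induction uses. The counter $l^*$ still counts only the appended hops and stays in $[0,h]$, so the congestion argument is unchanged, and the total distance remains at most $\Delta$, so the last active round is still $\lceil\Delta\sqrt h + h\rceil$. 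The main obstacle is precisely this transfer: I must verify that Lemma~\ref{lemma:sendBound}'s induction tolerates multiple sources whose first sends are deliberately delayed to round $\lceil\hat\delta(u)\sqrt h\rceil$, and that the $l^*$-jump congestion bound is insensitive to the nonzero starting offsets $\hat\delta(u)$. Once those are checked, the extension inherits the same $O(\Delta\sqrt h)$-round and $\sqrt h$-congestion guarantees, completing the lemma.
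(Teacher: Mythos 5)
Your proposal is correct and follows essentially the same route as the paper: the round bound and completeness come from Lemma~\ref{lemma:sendBound}, the congestion bound from the observation that each resend forces $d^*$ to drop by an integer and hence $l^*$ to jump by more than $\sqrt{h}$, and the extension is handled by changing only the initialization. Your write-up merely makes explicit the soundness induction and the lexicographic bookkeeping that the paper leaves implicit in the discussion preceding the lemma.
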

 
 As in~\cite{HNS17} we can now combine our Algorithm~\ref{algSSSP} with Ghaffari's randomized framework~\cite{Ghaffari15}
 to compute $h$-hop APSP and $h$-hop extensions (for all source nodes) in $\tilde{O}(\Delta\sqrt{h} + n\sqrt{h})$ rounds w.h.p. in $n$.
 The result can be readily  modified to include the number of sources, $k$, by sending the current estimates $(d^*, l^*)$ in round
 $\lceil d^*\cdot \gamma + l^*\rceil$ , where $\gamma =\sqrt{hk/\Delta}$  as in pipelined algorithm in~\cite{AR18}  (instead of  $\lceil d^*\cdot \sqrt{h} + l^* \rceil$), and 
 the resulting algorithm runs in 
 $O(\sqrt{\Delta hk})$ rounds with congestion bounded by $\sqrt{\Delta h/k}$. Then we can compute $h$-hop
 $k$-SSP and $h$-hop extensions for all $k$ sources in $\tilde{O}(\sqrt{\Delta hk})$ rounds.

\section{Additional Results}\label{sec:additional}

\subsection{A Faster Randomized Algorithm for Weighted APSP with Arbitrary Edge-Weights}\label{sec:randAPSP}

We adapt the randomized framework of Huang et al.~\cite{HNS17} to obtain a faster 
randomized algorithm for weighted APSP with arbitrary edge weights. Our randomized algorithm 
runs in $\tilde{O}(n^{4/3})$ rounds w.h.p. in $n$, improving on the previous best bound of
 $\tilde{O}(n^{3/2})$ rounds (which is deterministic) in Agarwal et al.~\cite{ARKP18}.
 We describe our randomized algorithm below.

As described in Section~\ref{sec:sssp}, Huang et al.\cite{HNS17} use two algorithms {\it short-range} and
{\it short-range-extension} for integer-weighted
APSP for which they have randomized algorithms that run in $\tilde{O}(n \sqrt h)$ rounds w.h.p. 
in $n$.  (We presented  simplified versions of these two algorithms in Section~\ref{sec:sssp}.)
Since we consider arbitrary edge weights here, we will instead use $h$ rounds of the Bellman-Ford algorithm~\cite{Bellman58} for both steps, which will take $O(kh)$ rounds  for $k$ source nodes.

We keep the remaining steps in~\cite{HNS17} unchanged:
These steps  involve having every `center' $c$ broadcast its estimated shortest distances, $\delta (c',c)$,
 from every other
center $c'$,  and each source node $x \in S$ sending its correct shortest distance, 
$\delta (x,c)$,
to each center $c$. (The set of {\it centers} is a random subset of vertices in $G$ of size $\tilde{O}(\sqrt n)$.)
These steps are shown  in~\cite{HNS17}  to take $\tilde{O}(n + \sqrt{nkq})$ rounds in total w.h.p. in $n$, where $q = \Theta(\frac{n\log n}{h})$.
This gives an  overall round complexity  $\tilde{O}(kh + n + \sqrt{nkq})$ for our algorithm.
Setting $h = n^{2/3}/k^{1/3}$ and $q = n^{1/3}k^{1/3}\log n$, we obtain the desired bound of $\tilde{O}(n + n^{2/3}k^{2/3})$ in Theorem~\ref{thm:arbitrary}.

\subsection{An $\tilde{O}(n)$-Rounds $(1+\epsilon)$ Approximation Algorithm for Weighted APSP with Non-negative Integer Edge-Weights}	\label{sec:approx}

Here we deal with the problem of finding $(1+\epsilon)$-approximate solution to the weighted APSP problem.
If edge-weights are strictly positive, the following result is known.

\begin{theorem}[~\cite{Nanongkai14, LP15}]	\label{thm:result}
There is a deterministic algorithm that computes $(1+\epsilon)$-approximate APSP on graphs with positive  polynomially bounded integer edge weights in $O((n/\epsilon^2)\cdot \log n)$ rounds.
\end{theorem}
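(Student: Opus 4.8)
The plan is to prove the theorem by a scaling reduction that converts the single weighted $(1+\epsilon)$-approximate APSP instance into $O(\log n)$ (capped) small-distance instances, each solved by an unweighted/bounded-distance APSP engine such as the ones underlying the $O(n)$-round unweighted APSP algorithms of \cite{LP13,PR18}. The scaling is the classical Zwick-style trick adapted to \congest. For each scale $i=0,1,\dots,\lceil\log(nW)\rceil$ (where $W=poly(n)$, so there are only $O(\log n)$ scales), set a granularity $\sigma_i=\epsilon 2^i/n$ and define rounded weights $w_i(e)=\lceil w(e)/\sigma_i\rceil$, discarding edges with $w(e)\geq 2^{i+1}$ (such edges cannot lie on a path of total weight below $2^{i+1}$). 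The use of the ceiling is the crucial point that exploits \emph{positivity}: every original edge keeps weight $w_i(e)\geq 1$, so in the subdivided unit-weight graph each real edge contributes at least one hop, keeping the BFS well defined and the scaled problem genuinely unweighted. This is exactly the feature that breaks for zero-weight edges, which is why the cited algorithms are restricted to positive weights.

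First I would carry out the error analysis that justifies the approximation. For any pair $u,v$ with true distance $d(u,v)\in[2^i,2^{i+1})$, consider the scale $i$ chosen so that $2^i\leq d(u,v)<2^{i+1}$. On the true shortest path $P$, which has at most $n-1$ edges, the rounded length satisfies $\sum_{e\in P} w_i(e)\cdot\sigma_i \in [\,d(u,v),\, d(u,v)+(n-1)\sigma_i\,]$, and since $(n-1)\sigma_i<\epsilon 2^i\leq \epsilon\, d(u,v)$, the scaled-and-rescaled distance overestimates $d(u,v)$ by at most a $(1+\epsilon)$ factor and never underestimates it. Taking, at each node, the minimum over all scales of $\sigma_i$ times the scaled distance therefore yields a $(1+\epsilon)$-approximation simultaneously for every pair. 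The same bound shows the only scaled distances we ever need satisfy $\sum_{e\in P} w_i(e)\le d(u,v)/\sigma_i < 2^{i+1}/\sigma_i = 2n/\epsilon$, so each scaled instance is a \emph{bounded-distance} APSP problem with scaled distances capped at $D=O(n/\epsilon)$.

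Next I would solve each capped scaled instance. Since all relevant scaled distances are integers bounded by $O(n/\epsilon)$, I would (conceptually) subdivide each edge of weight $w_i(e)$ into $w_i(e)=O(n/\epsilon)$ unit edges, with each physical node simulating the virtual subdivision nodes on its incident edges, and run an all-sources bounded-distance BFS up to hop bound $D=O(n/\epsilon)$. Using a pipelined source-detection/BFS schedule of the type used in the unweighted APSP algorithms \cite{LP13,PR18}, all $n$ real sources are propagated together so that a single scale completes in $\tilde O(n/\epsilon)$ rounds rather than in time proportional to the $\Theta(n^2/\epsilon)$ virtual-node count. Summed over the $O(\log n)$ factor-$2$ scales (or, with a finer $(1+\epsilon)$-geometric partition into $O(\epsilon^{-1}\log n)$ scales that isolates the extra $\epsilon^{-1}$), and adding the $\tilde O(n)$ overhead for each node to broadcast/collect its $n$ output distances, this yields the claimed $O((n/\epsilon^2)\log n)$ bound.

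The main obstacle I expect is exactly this last routing/scheduling step: establishing that the bounded-distance all-sources computation on the subdivided graph runs in $\tilde O(n/\epsilon)$ rounds with congestion $\tilde O(1)$ per edge, so that pipelining the $n$ sources does not inflate the round count and so that the $O(\log n)$ scales can be scheduled without their congestions multiplying. The error analysis and the positivity/ceiling argument are routine once set up; the delicate accounting is in placing the two $1/\epsilon$ factors precisely (one from the granularity $\sigma_i$ that fixes the hop/distance budget $D=O(n/\epsilon)$ per scale, and one from either the congestion of the pipelined BFS or the number of geometric scales), and in verifying that capping scaled distances at $O(n/\epsilon)$ never discards a path that the approximation guarantee needs.
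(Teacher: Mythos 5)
The first thing to note is that the paper contains no proof of Theorem~\ref{thm:result} to compare against: the statement is imported verbatim from \cite{Nanongkai14,LP15} (that is the point of stating it with a citation), and it is used as a black box in Section~\ref{sec:approx}. So your proposal can only be measured against the cited works themselves. Against that benchmark, your reconstruction is essentially faithful: per-scale rounding $w_i(e)=\lceil w(e)/\sigma_i\rceil$ with $\sigma_i=\epsilon 2^i/n$, discarding edges heavier than the scale, the observation that positivity forces $w_i(e)\geq 1$ so that hop count is dominated by scaled distance (exactly the feature that breaks for zero weights, which is why this paper's Section~\ref{sec:approx} needs its separate $n^2$-stretching fix), the never-underestimate error analysis with $(n-1)\sigma_i<\epsilon 2^i\leq \epsilon\, d(u,v)$, the minimum over scales, and the reduction of each scale to bounded-distance all-sources computation with distance cap $D=O(n/\epsilon)$ --- this is the actual skeleton of \cite{Nanongkai14} (randomized) and \cite{LP15} (deterministic).

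The load-bearing step, however, is exactly the one you defer: the claim that all $n$ sources can be propagated to scaled depth $D=O(n/\epsilon)$ in $\tilde{O}(n+D)$ rounds with bounded per-edge congestion. This is not obtainable ``conceptually'' by subdividing each edge into $O(n/\epsilon)$ unit edges and waving at BFS --- simulating $\Theta(n/\epsilon)$ virtual nodes per physical edge while pipelining $n$ wavefronts is precisely the content of the deterministic source-detection machinery of \cite{LP13,LP15}, which in fact handles bounded integer weights directly (ordering messages by weighted distance, with hops as tiebreaker) and needs no subdivision at all. Citing that primitive is legitimate --- the original papers prove it as their main technical lemma --- but as a self-contained proof your proposal is incomplete at exactly its critical point, as your final paragraph honestly concedes. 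Two smaller slips: the needed scaled distances are bounded by $d(u,v)/\sigma_i+(n-1)$, not $d(u,v)/\sigma_i$ (each ceiling can add one unit; harmless, since the cap stays $O(n/\epsilon)$); and your own accounting with factor-$2$ scales yields $O((n/\epsilon)\log n)$ total, which is \emph{stronger} than the stated $O((n/\epsilon^2)\log n)$ --- acceptable for proving the theorem as an upper bound, but a signal that the placement of the second $1/\epsilon$ factor (which in the cited papers' accounting comes out of the scale family or the detection parameters, not out of thin air) was never actually pinned down in your sketch.
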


The above result  does not hold when \textit{zero weight edges} are present.
Here we match the deterministic $O((n/\epsilon^2)\cdot \log n)$-round bound for this problem with
an algorithm that also handles zero edge-weights.

We first compute reachability between all pairs of vertices connected by zero-weight paths.
This is readily computed in $O(n)$ rounds, e.g., using~\cite{LP13,PR18}
while only considering only the zero weight edges 
(and ignoring the other edges).

We then  consider shortest path distances between pairs of
 vertices that have no zero-weight path connecting them.
The weight of any such path is at least $1$. To approximate these paths we
 increase the zero edge-weights to $1$ and transform every non-zero edge weight $w(e)$ to $n^2\cdot w(e)$.
Let this modified graph be $G' = (V, E, w')$ .
Thus the weight of an $l$-hop path $p$ in $G'$, $w'(p)$, satisfies $w'(p) \leq w(p) \cdot n^2 + l$.
Since the modified graph $G'$ has polynomially bounded positive edge weights, we can use
 the result in Theorem~\ref{thm:result}  to
compute $(1+\epsilon/3)$-approximate APSP on this graph in $\tilde{O}(9n/\epsilon^2)$ rounds.

 Fix a pair of vertices $u, v$. Let $p$ be a shortest path from $u$ to $v$ in $G$, and let its
 hop-length be $l$.
 Then $w' (p) \leq n^2 \cdot w(p) + l$.
 Let $p'$ be a $(1+\epsilon/3)$-approximate shortest path from $u$ to $v$, and let its hop-length be $l$.
 Then $w' (p') \leq (1+\epsilon/3)\cdot w'(p) \leq (1+\epsilon/3)\cdot (n^2\cdot w(p) + l)$.
 Dividing $w'(p')$ by $n^2$ gives us $w'(p')/n^2 < w(p) (1+\epsilon/3) +  (l/n^2) (1+\epsilon/3) < w(p) + w(p)\epsilon/3 + 2/n \leq w(p) (1+\epsilon/3) + 2\epsilon/3 \leq w(p) (1+\epsilon)$ (as long as $\epsilon > 3/n$ and since $w(p) \geq 1$), and this 
establishes Theorem~\ref{thm:algApprox}. 
 
\section{Conclusion}\label{sec:conclusion}

We have presented  new improved deterministic distributed algorithms for
weighted shortest paths (both APSP, and for $k$ sources)
in graphs with moderate non-negative integer weights. These results build on our recent pipelined algorithm for weighted shortest paths~\cite{AR18}.
We have also presented simplications to two procedures in the randomized APSP algorithm in Huang et al.~\cite{HNS17} by streamlining the pipelined APSP
algorithm in~\cite{AR18} for SSSP versions.
 A key feature of our shortest path algorithms is that they can handle zero-weighted edges, which are known 
to present a challenge in the design of distributed algorithms for non-negative integer weights. We have
also presented a faster and more streamlined algorithm to compute a blocker set,  an improved randomized APSP (and $k$-SSP) algorithm for arbitrary edge-weights, and
an approximate APSP algorithm that can handle zero-weighted edges.

 An important area for further research is to investigate
further improvements to the deterministic distributed computation of a blocker set, beyond the algorithm in~\cite{ARKP18} and the  improvements we have presented here.

\bibliographystyle{abbrv}
\bibliography{references}

\end{document}